\DeclareMathAlphabet{\bm}{OML}{cmr}{bx}{it}
\DeclareMathAlphabet{\mathsf}{OT1}{cmss}{m}{n}
\DeclareMathAlphabet{\bs}{T1}{cmss}{bx}{sl}
\DeclareMathAlphabet{\ms}{T1}{cmss}{m}{sl}
\DeclareMathAlphabet{\mathpzc}{OML}{zplm}{m}{it}
\newcommand{\bg}[1]{\boldsymbol{#1}} 
\newcommand{\bb}{\mathbb}
\newcommand{\rs}{\mathrm}
\newcommand{\mc}{\mathcal}
\newcommand{\bh}[1]{\hat{\bm #1}}
\newcommand{\bt}[1]{\tilde{\bm #1}}
\newcommand{\ds}{\mathds}
\newtheorem{lemma}{Lemma}
\newtheorem{proposition}{Proposition}
\newtheorem{remark}{Remark}
\newtheorem{definition}{Definition}
\newtheorem{example}{Example}
\newtheorem{problem}{Problem}
\newtheorem{assumption}{Assumption}
\newtheorem{conjecture}{Conjecture}
\begin{document}

\title{Optimal deep neural networks for sparse recovery \\ via Laplace techniques}

\author{Steffen~Limmer 
        and~S\l awomir~Sta\'nczak,~\IEEEmembership{Senior~Member,~IEEE}
\thanks{This work was supported in part by the German Research Foundation (DFG) under Grant STA 864/8-1.}
\thanks{S. Limmer and S. Sta\'nczak are with the Network-Information Theory Group, TU Berlin.} 
}

\markboth{}
{Limmer \MakeLowercase{\textit{et al.}}: Optimal deep neural networks for sparse recovery \\ via Laplace techniques}

\maketitle

\begin{abstract}
This paper introduces Laplace techniques for designing a neural network, with the goal of estimating simplex-constraint sparse vectors from compressed measurements. To this end, we recast the problem of MMSE estimation (w.r.t. a pre-defined uniform input distribution) as the problem of computing the centroid of some polytope that results from the intersection of the simplex and an affine subspace determined by the measurements. Owing to the specific structure, it is shown that the centroid can be computed analytically by extending a recent result that facilitates the volume computation of polytopes via Laplace transformations. A main insight of this paper is that the desired volume and centroid computations can be performed by a classical deep neural network comprising threshold functions, rectified linear (ReLU) and rectified polynomial (ReP) activation functions. The proposed construction of a deep neural network for sparse recovery is completely analytic so that time-consuming training procedures are not necessary. Furthermore, we show that the number of layers in our construction is equal to the number of measurements which might enable novel low-latency sparse recovery algorithms for a larger class of signals than that assumed in this paper. To assess the applicability of the proposed uniform input distribution, we showcase the recovery performance on samples that are soft-classification vectors generated by two standard datasets. As both volume and centroid computation are known to be computationally hard, the network width grows exponentially in the worst-case. It can be, however, decreased by inducing sparse connectivity in the neural network via a well-suited basis of the affine subspace. Finally, the presented analytical construction may serve as a viable initialization to be further optimized and trained using particular input datasets at hand.
\end{abstract}

\begin{IEEEkeywords}
sparse recovery, compressed sensing, deep neural networks, convex geometry
\end{IEEEkeywords}

\IEEEpeerreviewmaketitle

\section{Introduction}
Deep neural networks have enabled significant improvements in a series of areas ranging from image classification to speech recognition \cite{GooBenCou16}.
While training neural networks is in general extremely time-consuming, the use of trained networks for real-time inference has been shown to be feasible, which has rendered successful commercial applications as self-driving cars and real-time machine translation possible \cite{GooBenCou16}. These findings took many researchers by surprise as the underlying problems were believed to be both computationally and theoretically hard. Yet, theoretical foundations of successful applications remain thin and analytical insights have only recently appeared in the literature. For instance, results on the translation and deformation stability of convolutional neural networks \cite{Mal16,WiaBoe15} deserve a particular attention. In this paper, we are interested in the interplay between the sparsity of input signals and the architecture of neural networks. This interplay between sparsity and particular network architectures has been so far only addressed in numerical studies \cite{XinWanGaoWip16,KamMan15,WanLinHua16} or the studies are based on shallow architectures \cite{LimSta16}. This paper contributes towards closing this important gap in theory by considering a simple sparse recovery problem that is shown to admit an optimal solution by an analytically-designed neural network.

More precisely, we address the problem of estimating a non-negative compressible vector from a set of noiseless measurements via an $M$-layer deep neural network. The real-valued input vector $\bm{x} \in \bb{R}_+^N$ is assumed to be distributed uniformly over the standard simplex and is to be estimated from $M<N$ linear measurements $\bm{y}\in \bb{R}^M$ given by 
\begin{align}\label{equ:equ_measurements}
\bm{y} = \bm{A} \bm{x}. 
\end{align}
Here and hereafter, the measurement matrix $\bm{A}\in\bb{R}^{M \times N}$ is assumed to be an arbitrary fixed full-rank matrix. We refer the reader to Fig. \ref{fig:realizations} in Sec. \ref{sec:numresults} for an illustration of a realization $\bm{x}$ of the stochastic process considered in this paper as well as markedly similar soft-classification vectors generated by standard datasets.
The problem of designing efficient algorithms for recovering $\bm{x}$ given $\bm{y}$ has been at the core of many research works with well-understood algorithmic methods, including $\ell_1$-minimization \cite{ForRau11} and iterative soft/hard-thresholding algorithms (ISTA / IHT) \cite{ForRau11,BecTeb09}. Training structurally similar neural networks has been investigated in \cite{XinWanGaoWip16,KamMan15,WanLinHua16}; in particular, these works focused on the problem of fine-tuning parameters using stochastic gradient descent because a full search over all possible architectures comprising varying number of layers, activation functions and size of the weight matrices is infeasible. In contrast to these approaches, this paper introduces multidimensional Laplace transform techniques to obtain both the network architecture as well as the parameters in a fully analytical manner without the need for data-driven optimization. Interestingly, this approach reveals an analytical  explanation for the effectiveness of threshold functions, rectified linear (ReLU) and rectified polynomial (ReP) activation functions. Moreover, in our view, the resulting Laplace neural network can be applied to a larger class of sparse recovery problems than that assumed in this paper which is supported by a numerical study.

\subsection{Notation and Some General Assumptions}\label{sec:not}

Throughout the paper, the sets of reals, nonnegative reals and reals excluding the origin are designated by $\bb{R}$, $\bb{R}_{+}$ and $\bb{R}_{\neq 0}$, respectively. $\bb{S}^{N-1}\subset\bb{R}^N$ and $\Delta\subset\bb{R}^N$ denote the $N$-dimensional unit sphere and the standard simplex defined as $\Delta:= \{\bm{x}\in\bb{R}_{+}^N: \sum\nolimits_{n=1}^N x_n \leq 1\}$. We use lowercase, bold lowercase and bold uppercase serif letters $x$, $\bm{x}$, $\bm{X}$ to denote scalars, vectors and matrices, respectively. Sans-serif letters are used to refer to scalar, vector and matrix random variables $\ms{x}$, $\bs{x}$, $\bs{X}$. 
Throughout the paper, all random vectors are functions defined over a suitable probability measure space $(\Omega,\mc{A},p)$ where $\Omega$ is a sample space, $\mc{A}$ is a $\sigma$-algebra, and $p:\mc{A}\to[0,1]$ a probability measure on $\mc{A}$.\footnote{In particular, if the random vectors are in $\bb{R}_+^N$, then $\Omega=\bb{R}_+^N$ and $\Delta\in\mc{A}$ with $\mc{A}$ being the power set of $\bb{R}_+^N$.}. We use $p(\mc{X})=p(\bm{x}\in\mc{X})$ where $\mc{X}\in\mc{A}$ and $\mc{U}(\mc{X})$ to denote the uniform distribution over the set $\mc{X}\in\mc{A}$; finally, functions of random vectors are assumed to be measurable functions, and we further assume that random vectors $\bs{x}$ are absolute continuous with probability density function (pdf) $p_\bs{x}(\bm{x})$ and expectation $\bb{E}\left[ \bs{x} \right] < \infty$. 
We use $\bold{0}$, $\bold{1}$, $\bm{e}_n$ and $\bm{I}$ to denote the vectors of all zeros, all ones, $n$-th Euclidean basis vector, and the identity matrix, where the size will be clear from the context. The $i$-th column, resp. $j$-th row, resp. submatrix of $i_1$-th to $i_2$-th column and $j_1$~-~th to $j_2$~-~th row, of a matrix is designated by $\bm{A}_{:,i}$, $\bm{A}_{j,:}$ and $\bm{A}_{i_1:i_2,j_1:j_2}$. $\rs{tr}\{\cdot\}$, $\odot$, $\oslash$ and $\ds{1}_{\mc{X}}: \bm{x}\to \{0,1\}$ denote the trace of a matrix, entrywise product, entrywise division and the indicator function defined as $\ds{1}_{\mc{X}}(\bm{x})=1$ if $\bm{x} \in \mc{X}$ and $0$ otherwise. $\ds{1}_+(x)$ is the Heaviside function and $(x)_+:=\rs{max}(0,x)$ is the rectified linear function.

\section{Optimal reconstruction by centroid computation of intersection polytopes}
\begin{figure}
\centering \subfigure[Intersection polytope $\mc{P}_\bm{t}$]{
  \includegraphics[width=0.9\linewidth]{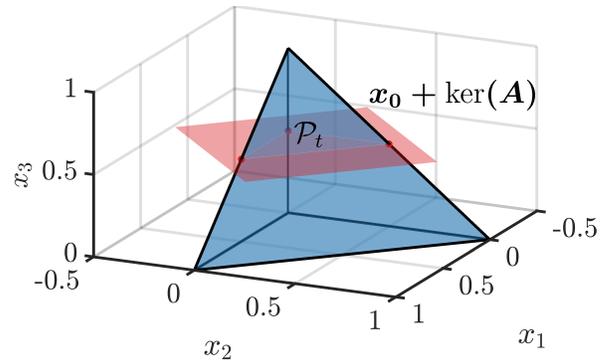}
} \hspace{0cm} \subfigure[Centroid $\bh{x}$]{
  \includegraphics[width=0.9\linewidth]{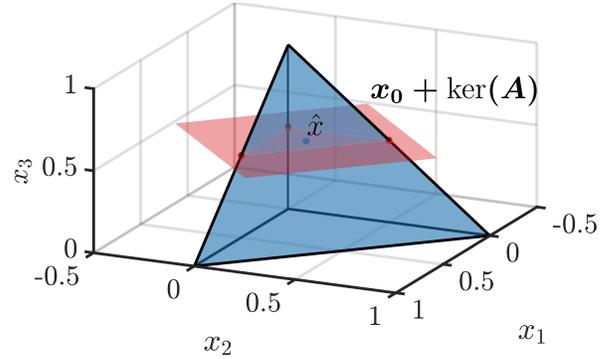}
}
\caption{Intersection polytope and centroid for $\bm{t} = 0.5$ and $\bm{A}~=\bm{V}_s^T=~[0,0,1]$.}
\label{fig:poly1}
\end{figure}

We assume that the sought vector $\bm{x}\in\bb{R}_+^N$ is a realization of a non-negative random vector $\bs{x}$ drawn from a joint distribution with a pdf denoted by $p_\bs{x}(\bm{x})$. Throughout the paper, we have the following assumption:
\begin{assumption}
\label{as:simplex_support}
The standard simplex $\Delta$ is the support of $p_\bs{x}(\bm{x})$ implying that
$p_\bs{x}(\Delta) = p_\bs{x}(\bm{x} \in \Delta) = 1$. 
\end{assumption}
This assumption imposes some compressibility on $\bm{x}$, which is often referred to as soft sparsity. In practice, input signals $\bm{x} \in \Delta$ appear freqeuently in applications that involve discrete probability vectors including softmax classification in machine learning \cite{GooBenCou16} as illustrated in Fig. \ref{fig:mnist_examples} and \ref{fig:cifar_examples}, multiple hypothesis testing in statistics \cite{GooBenCou16} or maximum likelihood decoding in communications \cite{TseVis05}. The problem of compressing and recovering such vectors applies in particular to distributed decision-making problems, where local decisions are distributed among different decision makers and need to be fused to improve an overall decision metric. We note that the set of discrete probability vectors of length $N+1$ generates the dihedral face of a simplex in dimension $N+1$ that can be embedded naturally into the (solid) simplex of dimension $N$ by removing one arbitrary component. We refer the reader to Sec. \ref{sec:numresults} for a more detailed description as well as numerical results of the proposed recovery method on soft-classification vectors generated by two standard datasets.

By Assumption \ref{as:simplex_support}, given a vector $\bm{y}=\bm{A} \bm{x}$, $\bm{A} \in \bb{R}^{M \times N}$, the set of feasible solutions is restricted to a polytope
\begin{align}\label{equ:poly1}
\mc{P}_\bm{y} := \Delta \cap \{ \bm{x} : \bm{A}\bm{x} = \bm{y} \} = \Delta \cap \left( \bm{x}_0 + \rs{ker}(\bm{A}) \right),\text{\footnotemark}
\end{align}
\footnotetext{For two sets $\mc{X}$ and $\mc{Y}$ we denote their Minkowski Sum by $\mc{X}+\mc{Y}:=\{\bm{x}+\bm{y}: \ \bm{x} \in \mc{X}, \bm{y} \in \mc{Y}\}$. Here, one set is the singleton $\bm{x}_0$.} where $\bm{x}_0$ is an arbitrary solution to $\bm{y} = \bm{A} \bm{x}$ (e.g. the Moore-Penrose pseudo inverse $\bm{x}_0= \bm{A}^{\dagger} \bm{y}$) and $\rs{ker}(\bm{A})$ denotes the $N-M$-dimensional kernel of $\bm{A}$. In other words, upon observing $\bm{y}\in\bb{R}^M$, the support of the conditional pdf $p_{\bs{x} \vert \bs{y}}(\bm{x}\vert\bm{y})$ is limited to $\mc{P}_\bm{y}$.
To simplify the subsequent derivations, let $\bm{V}_0$ and $\bm{V}_s$ constitute a basis of $\rs{ker}(\bm{A})$ and $\rs{ker}^{\perp}(\bm{A})$ as obtained, for instance, by the singular-value decomposition (SVD)
\begin{align}\label{equ:svd}
\bm{A} = \bm{U} \bm{\Sigma} \bm{V}^T = \begin{bmatrix} \bm{U}_s  \end{bmatrix} \begin{bmatrix} \bm{\Sigma}_s & \bold{0}  \end{bmatrix} \begin{bmatrix} \bm{V}_s^T \\ \bm{V}_0^T \end{bmatrix}.
\end{align}
Here, $\bm{U}_s\in\bb{R}^{M \times M}$, $\bm{\Sigma}_s \in \bb{R}^{M \times M}$, $\bm{V}_s \in \bb{R}^{N \times M}$ and $\bm{V}_0 \in \bb{R}^{N \times N-M}$.
Then, we apply \eqref{equ:svd} to \eqref{equ:poly1} and multiply $\bm{A}\bm{x} = \bm{y}$ from left by $\Sigma_s^{-1} \bm{U}_s^T$ (note that $\bm{U}_s^T \bm{U}_s = \bm{I}_M$) to obtain an equivalent description for the same polytope given by 
\begin{align}\label{equ:poly2}
\mc{P}_\bm{t} := \Delta \cap \{ \bm{x} : \bm{V}_s^T \bm{x} = \bm{t}\}, \ \rs{dim}(\mc{P}_\bm{t}) = N-M.
\end{align}
This description uses the equivalent measurement vector 
\begin{align}\label{equ:equivalent_measurements}
\bm{t} := \bm{\Sigma}_s^{-1} \bm{U}_s^T \bm{y} = \bm{V}_s^T \bm{x}
\end{align}
and defines the intersection polytope in terms of $\rs{ker}^{\perp}(\bm{A})$ via the orthogonal basis $\bm{V}_s$. We point out that as $\bm{A}$ is assumed to be real, so are also $\bm{U}$ and $\bm{V}$. Fig. \ref{fig:poly1}(a) depicts an example of the polytope $\mc{P}_\bm{t}$ for $M=1$ and $N=3$.

\begin{lemma}[Optimality of centroid estimator under uniform distribution]\label{lem:centest}
Assume $\bs{x} \sim \mc{U}(\Delta)$ and suppose that a compressed realization $\bm{t} = \bm{V}_s^T \bm{x}$ \eqref{equ:equivalent_measurements} has been observed. Let $\varrho_\mc{P}$ be the Lebesgue measure on the $(N-M)$-dimensional affine subspace that contains $\mc{P}_\bm{t}$ (see \cite[Sec. 2.4]{MakPod13}) and assume that $\rs{vol}(\mc{P}_\bm{t}):=\int_{\mc{P}_\bm{t}} 1 \ d\varrho_\mc{P} > 0$. Then, the conditional mean estimator $\bh{x}$ of $\bm{x}$ given $\bm{t}$ has the MMSE property
\begin{align}
\bb{E}\left[ \lVert \bs{x} - \bh{x} \rVert_2^2 \right] = \inf_f \bb{E}\left[ \lVert \bs{x} - f(\bm{t}) \rVert_2^2 \right]
\end{align}
and is obtained by
\begin{align}\label{equ:centest}
\hat{x}_n = \rs{vol}(\mc{P}_\bm{t})^{-1} \int_{\mc{P}_\bm{t}} {x}_n  \ d\varrho_\mc{P}, \ n\in\{1,\hdots,N\},
\end{align}
i.e., the centroid of the polytope $\mc{P}_\bm{t}$, cf. \eqref{equ:poly2}.
\end{lemma}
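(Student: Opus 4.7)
The plan is to split the lemma into two largely independent claims and dispatch them in turn: first, that the MMSE-optimal estimator is the conditional mean $\mathbb{E}[\bs{x}\vert\bs{t}=\bm{t}]$, and second, that under $\bs{x}\sim\mc{U}(\Delta)$ this conditional mean coincides with the polytope centroid in \eqref{equ:centest}.

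For the first claim I would invoke the standard orthogonality argument. Writing $f(\bm{t}) = \mathbb{E}[\bs{x}\vert\bm{t}] + (f(\bm{t}) - \mathbb{E}[\bs{x}\vert\bm{t}])$ inside the squared error, expanding, and taking iterated expectations shows that the cross term vanishes because $\bs{x} - \mathbb{E}[\bs{x}\vert\bm{t}]$ is orthogonal (in $L^2$) to any measurable function of $\bm{t}$. This leaves a sum of two nonnegative terms, of which only one depends on $f$, and the infimum is attained precisely at $f^*(\bm{t}) = \mathbb{E}[\bs{x}\vert\bm{t}]$. This part is classical and the write-up can simply cite it.

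For the second claim, the key is to identify the conditional distribution $p_{\bs{x}\vert\bs{t}}(\cdot\vert\bm{t})$ as the uniform distribution on the $(N-M)$-dimensional slice $\mc{P}_\bm{t}$. I would argue this by a change of variables built on the orthogonal decomposition $\mathbb{R}^N = \operatorname{ker}^\perp(\bm{A}) \oplus \operatorname{ker}(\bm{A})$: set $\bm{x} = \bm{V}_s\bm{t} + \bm{V}_0\bm{z}$, which is a volume-preserving bijection from $\mathbb{R}^N$ onto $\mathbb{R}^M \times \mathbb{R}^{N-M}$ since $[\bm{V}_s,\bm{V}_0]$ is orthonormal by \eqref{equ:svd}. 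The joint density of $(\bs{t},\bs{z})$ is therefore a constant multiple of $\mathds{1}_\Delta(\bm{V}_s\bm{t} + \bm{V}_0\bm{z})$. The marginal density of $\bs{t}$ at the observed value is, up to the same constant, $\operatorname{vol}(\mc{P}_\bm{t})$, which is positive by hypothesis, so the conditional density of $\bs{z}$ given $\bm{t}$ is exactly $\mathds{1}\{\bm{V}_s\bm{t}+\bm{V}_0\bm{z}\in\Delta\}/\operatorname{vol}(\mc{P}_\bm{t})$. Pushing this conditional distribution back through $\bm{z}\mapsto \bm{V}_s\bm{t}+\bm{V}_0\bm{z}$ yields the uniform law on $\mc{P}_\bm{t}$ with respect to $\varrho_\mc{P}$. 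Taking componentwise expectation gives \eqref{equ:centest}.

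The main obstacle, and the only place where measure-theoretic care is needed, is the handling of the conditioning event $\{\bs{t}=\bm{t}\}$, which has zero probability. The orthonormal change of variables above is what makes this rigorous: it reduces the disintegration to a standard marginal/conditional factorization of densities on $\mathbb{R}^M\times\mathbb{R}^{N-M}$, bypassing any appeal to the general co-area formula and making the identification of $\varrho_\mc{P}$ with the $(N-M)$-dimensional Lebesgue measure in the $\bm{z}$-parametrization transparent.
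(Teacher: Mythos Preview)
Your proposal is correct and in fact supplies considerably more detail than the paper does: the paper's own proof of this lemma consists of a single sentence deferring to a standard estimation-theory reference, so your two-step argument (orthogonality for the MMSE property, then an orthonormal change of variables $\bm{x}=\bm{V}_s\bm{t}+\bm{V}_0\bm{z}$ to identify the conditional law as uniform on $\mc{P}_\bm{t}$) is an explicit write-up of what the paper simply cites. There is nothing to correct; if anything, your remark that the first part ``can simply cite it'' is exactly what the paper does for the whole lemma.
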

\begin{proof}
The proof is a standard result in estimation theory \cite{Kay93}.
\end{proof}

The remaining part of the paper is devoted to the following problem:
\begin{problem}\label{prob:centest}
Given a fixed measurement matrix $\bm{A}$, design a neural network composed of only elementary arithmetic operations and activation functions such that if the input to the network is $\bm{t}$, then its output is

\begin{enumerate}[(P1)]
\item \textit{the volume} $\rs{vol}(\mc{P}_\bm{t})$, and
\item \textit{the moments} $\bg{\mu} = \int_{\mc{P}_\bm{t}} \bm{x}  \ d\varrho_\mc{P}$. 
\end{enumerate}
\end{problem}
The sought neural network is depicted in Fig. \ref{fig:neural_computation_network}
\begin{figure}
\centering 
  \includegraphics[width=0.7\linewidth]{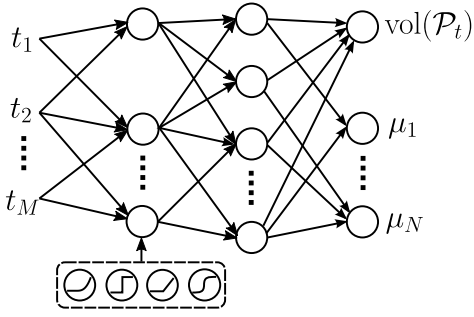}
\caption{Neural network to solve problem (P1) and (P2).}
\label{fig:neural_computation_network}
\end{figure}

\section{A review of Lasserre's Laplace techniques}
To make this paper as self-contained as possible and highlight the theoretical contribution of the original works \cite{LasZer01b,Las15b}, we start this section with a review of Laplace techniques that are used in Sec. \ref{sec:volnet} and \ref{sec:centnet} for computing the volume and moments of full- and lower-dimensional polytopes. First, we introduce a set of definitions restated from \cite{BryTuaGlaPru92,GlaPruSko06}.

\begin{definition}[Def. 1.4.3 \cite{GlaPruSko06}]\label{def:lt1}
The (one-sided) Laplace transform (LT) of a function $f: \bb{R}_+\to \bb{R}$ is the function $F:\bb{C}\to \bb{R}$ defined by
\begin{align}\label{equ:lt}
F(\lambda) = \mc{L}_t( f(t) ) := \int_0^\infty f(t) \exp(-\lambda t) \ dt
\end{align}
provided that the integral exists. 
\end{definition}
Here and hereafter, we refer to $t$ as transform variable and $\lambda$ as Laplace variable and conveniently designate LT transform pairs by $f(t) \laplace F(\lambda)$.

\begin{remark}\label{rem:ltpos}
We assume input functions can be written in terms of the Heaviside function as $\ds{1}_+(t) f(t)$ where $\ds{1}_+$ will usually be omitted. In other words, here and hereafter we will only consider functions that are supported on (a subset of) the non-negative real line as well as multidimensional generalizations supported on (subsets of) the non-negative orthant, respectively. We refer the reader to Sec. \ref{subsec:arbitrary_support} for a discussion of possible extensions to a larger class of functions with arbitrary support.
\end{remark}

\begin{lemma}[Def. 1.4.4, Th. 1.4.8 \cite{GlaPruSko06}]\label{lem:ltexistence}
Let $f$ be locally integrable and assume there exists a number $\gamma \in \bb{R}$ such that
\begin{align}\label{equ:lapconv}
\int_0^\infty \lvert f(t) \rvert \exp(- \gamma t ) \ d t < \infty.
\end{align}
Then, the Laplace integral \eqref{equ:lt} is absolutely and uniformly convergent on $\{ \lambda \in \bb{C} : \rs{Re(\lambda)} \geq \gamma \}$, and 
\begin{align}\label{equ:ltabscissa}
\gamma_\rs{ac} := \rs{inf} \biggl\{ \gamma \in \bb{R} : \int_0^\infty \lvert f(t) \rvert \exp(-\gamma t) \ dt < \infty \biggr\}
\end{align}
is called the abscissa of absolute convergence.
\end{lemma}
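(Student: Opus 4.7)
The plan is to exploit the observation that the modulus $|\exp(-\lambda t)|$ depends only on $\rs{Re}(\lambda)$, which reduces the multi-parameter convergence problem on the complex half-plane $\{\rs{Re}(\lambda)\geq\gamma\}$ to a single real-variable comparison with the integrable envelope provided by hypothesis \eqref{equ:lapconv}.

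First I would record the pointwise bound
\begin{align*}
|f(t)\exp(-\lambda t)| = |f(t)|\exp(-\rs{Re}(\lambda)\,t) \leq |f(t)|\exp(-\gamma t), \quad t\geq 0, \ \rs{Re}(\lambda)\geq\gamma.
\end{align*}
Local integrability of $f$ ensures that $f(t)\exp(-\lambda t)$ is Lebesgue-integrable on any compact interval $[0,T]$, while the right-hand side is integrable on $[0,\infty)$ by the hypothesis. Direct comparison then yields absolute convergence of \eqref{equ:lt} at every $\lambda$ with $\rs{Re}(\lambda)\geq\gamma$.

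For uniform convergence I would apply the Weierstrass M-test in integral form: for every $T\geq 0$ and every admissible $\lambda$,
\begin{align*}
\left|\int_T^\infty f(t)\exp(-\lambda t)\,dt\right| \leq \int_T^\infty |f(t)|\exp(-\gamma t)\,dt,
\end{align*}
and the right-hand side depends only on $T$ and vanishes as $T\to\infty$ by absolute continuity of the dominating integral. Hence the tails of the Laplace integral vanish uniformly on $\{\rs{Re}(\lambda)\geq\gamma\}$, which is exactly the definition of uniform convergence. Finally, the same monotonicity $\exp(-\gamma' t)\leq\exp(-\gamma t)$ for $\gamma'\geq\gamma$ and $t\geq 0$ shows that the set $S:=\{\gamma\in\bb{R}:\int_0^\infty |f(t)|\exp(-\gamma t)\,dt<\infty\}$ is upward-closed; it is non-empty by hypothesis, so $\gamma_\rs{ac}:=\inf S\in\bb{R}\cup\{-\infty\}$ is well-defined and \eqref{equ:ltabscissa} is justified.

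The main obstacle, if any, is essentially bookkeeping: one must be careful with the precise definitions of absolute and uniform convergence for improper complex-parameter integrals. Both notions, however, collapse onto the single real scalar envelope $|f(t)|\exp(-\gamma t)$, so no analytical tool deeper than the Weierstrass M-test is required.
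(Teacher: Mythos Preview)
Your argument is correct: the modulus bound $|f(t)\exp(-\lambda t)|\le|f(t)|\exp(-\gamma t)$ for $\rs{Re}(\lambda)\ge\gamma$ and $t\ge 0$ immediately yields absolute convergence by comparison, and the Weierstrass M-test on the tails gives uniform convergence; the upward-closedness of the set in \eqref{equ:ltabscissa} is clear from monotonicity of $\gamma\mapsto\exp(-\gamma t)$.

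There is nothing to compare against, however: the paper does not supply its own proof of this lemma but simply quotes it from \cite{GlaPruSko06} (Def.~1.4.4 and Th.~1.4.8), as indicated in the lemma heading. Your write-up is the standard textbook proof and is entirely adequate here.
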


\begin{lemma}[Th. 1.4.12]\cite{GlaPruSko06}\label{lem:iltexistence}
Let $f$ be as in Lemma \ref{lem:ltexistence} and (possibly piecewise) continuous. Then in points $t$ of continuity the inverse Laplace transform (ILT) is given by
\begin{align}\label{equ:iltcontinuous}
f(t) = \mc{L}_\lambda^{-1}( F(\lambda) ) := \frac{1}{2 \pi i} \int_{c -i \infty}^{c + i \infty} F(\lambda) \exp(\lambda t) \ d\lambda, \ c>\gamma_\rs{ac}
\end{align}
and in points of discontinuity we have
\begin{align}
\mc{L}_\lambda^{-1}( F(\lambda) ) & = \frac{f(t^+) + f(t^-)}{2}.
\end{align}
\end{lemma}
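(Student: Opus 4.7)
The plan is to reduce the Bromwich formula to the classical Fourier inversion theorem via the standard substitution $\lambda = c + i\omega$ with $c > \gamma_\rs{ac}$ fixed. This turns $d\lambda = i\,d\omega$ and the contour integral becomes
\[
\frac{1}{2\pi i} \int_{c-i\infty}^{c+i\infty} F(\lambda) e^{\lambda t} \,d\lambda = \frac{e^{ct}}{2\pi} \int_{-\infty}^{\infty} F(c+i\omega) e^{i\omega t} \,d\omega.
\]
By Definition \ref{def:lt1}, $F(c+i\omega) = \int_0^\infty e^{-cs} f(s)\, e^{-i\omega s}\,ds$ is precisely the Fourier transform (with the $e^{-i\omega s}$ convention) of the damped auxiliary function $g(s) := \ds{1}_+(s)\, e^{-cs} f(s)$. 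The choice $c > \gamma_\rs{ac}$ together with Lemma \ref{lem:ltexistence} guarantees $\int_{\bb{R}} |g(s)|\,ds < \infty$, so $g \in L^1(\bb{R})$ and $\hat{g}(\omega) = F(c+i\omega)$ is a continuous function tending to zero at infinity by Riemann--Lebesgue.

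Next I would apply the pointwise Fourier inversion theorem to $g$. Working with the symmetric truncation $\int_{c-iR}^{c+iR}$, Fubini (justified because $g \in L^1$ and the $\omega$-range is compact) permits exchanging the order of integration to obtain the Dirichlet-kernel representation
\[
\frac{1}{2\pi}\int_{-R}^{R} \hat{g}(\omega) e^{i\omega t} \,d\omega = \int_{-\infty}^{\infty} g(s)\, \frac{\sin(R(t-s))}{\pi(t-s)} \,ds.
\]
If $f$, and hence $g$, is locally of bounded variation on a neighborhood of $t$ (which is the natural reading of the paper's "piecewise continuous" hypothesis once pathological oscillatory jumps are excluded), the Dirichlet--Jordan localization theorem shows that the right-hand side converges to $\tfrac{1}{2}(g(t^+)+g(t^-))$ as $R \to \infty$. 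Multiplying by $e^{ct}$ and using continuity of the exponential weight yields $\tfrac{1}{2}(f(t^+)+f(t^-))$, which collapses to $f(t)$ at points where $f$ is continuous.

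The main obstacle is the delicate regularity issue: mere local integrability together with piecewise continuity does not by itself deliver pointwise Fourier inversion, so one must either (i) invoke a Dini-type local criterion, or (ii) add local bounded variation as above, or (iii) interpret the Bromwich integral only as a Cauchy principal value over the symmetric contour. I would state the bounded-variation qualifier explicitly and cite the corresponding Fourier inversion statement from \cite{GlaPruSko06} (on which Theorems~1.4.8 and~1.4.12 are based). A minor cleanup is to note that neither formula is affected by modifying $f$ on a Lebesgue-null set, so the equality at jumps should be read as picking the canonical representative that assigns $\tfrac{1}{2}(f(t^+)+f(t^-))$ at each discontinuity.
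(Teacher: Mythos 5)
The paper does not prove this lemma at all: it is restated verbatim as Theorem~1.4.12 of \cite{GlaPruSko06} and the burden of proof is delegated entirely to that reference. Your reconstruction is the standard argument behind that theorem --- substituting $\lambda = c + i\omega$ to recognize the Bromwich integral as Fourier inversion of the damped function $g(s) = \ds{1}_+(s)e^{-cs}f(s)$, using $c > \gamma_{\rs{ac}}$ to place $g$ in $L^1(\bb{R})$, and then invoking Dirichlet-kernel localization --- and it is correct in outline. The one substantive point you raise is also the right one: ``locally integrable and piecewise continuous,'' as the lemma is literally stated, is not by itself enough for pointwise Fourier inversion, so either a local bounded-variation (or Dini) hypothesis must be added, or the Bromwich integral must be read as a symmetric principal value $\lim_{R\to\infty}\int_{c-iR}^{c+iR}$; textbook statements of this theorem, including the cited one, carry exactly such a qualifier, and the paper's paraphrase silently drops it. What your reconstruction buys over the paper's bare citation is precisely this visibility of the regularity hypotheses on which the formula for $\tfrac{1}{2}(f(t^+)+f(t^-))$ at jump points actually rests; what the citation buys is brevity for a result that plays only an auxiliary role here, since every concrete inverse transform in the paper is assembled from the table of elementary pairs rather than evaluated by contour integration. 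Your closing remark about null-set representatives is correct but immaterial to how the lemma is used downstream.
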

As $f(t)= 0$ for $t<0$, $f(t)$ in $f(t) \laplace F(\lambda)$ is to be understood as $\ds{1}_+(t) f(t)$.

\begin{definition}\label{def:mdim_lt}
The (one-sided) $M$-dimensional LT ($M$D-LT) of a function $f:\bb{R}_+^M \to \bb{R}$ is the function $F:\bb{C}^M \to \bb{C}$ defined by a concatenation of LTs
\begin{align}\label{equ:mdim_lt}
F(\bg{\lambda}) = \left( \circ_{m=1}^M \mc{L}_{t_m} \right)\left( f(\bm{t}) \right)
\end{align}
provided that all integrals exist. If the individual LTs converge absolutely and uniformly, the order of integration in \eqref{equ:mdim_lt} is arbitrary \cite{BryTuaGlaPru92}.
\end{definition}

\begin{definition}\label{def:mdim_ilt}
The $M$-dimensional ILT ($M$D-ILT) of a function $F(\bg{\lambda}): \bb{C}^M \to \bb{C}$ is defined by the concatenation of ILTs
\begin{align}\label{equ:mdim_ilt}
f(\bm{t}) = \left( \circ_{m=1}^M \mc{L}_{\lambda_m}^{-1} \right)\left( F(\bg{\lambda}) \right)
\end{align}
provided that all integrals exist. If the individual ILTs converge absolutely and uniformly the order of integration is again arbitrary. The function $f(\bm{t})$ in expressions like $f(\bm{t}) \laplace F(\bg{\lambda})$ is to be understood as $(\prod_{m=1}^M \ds{1}_+(t_m)) f(\bm{t})$.
\end{definition}

\begin{remark}\label{rem:ommit_existence_conditions}
To make this paper less technical, and therefore more accessible for a broader audience, we omit a more detailed exposition of operational properties as well as conditions on the existence of the (multidimensional) Laplace operators. We refer the interested reader to \cite{BryTuaGlaPru92,GlaPruSko06}, and point out that the transforms appearing in this article are obtained by combining standard transform pairs summarized in Tab. \ref{tab:tab_lts}.
\end{remark}

Now we are in a position to introduce the general idea of Lasserre for polyhedral volume computation \cite{LasZer01b,Las15b} that consists in exploiting the identity
\begin{align}\label{equ:lapid}
f(\bm{t}) & = \left( \circ_{m} \mc{L}_{\lambda_m}^{-1} \right)\left( \circ_{m} \mc{L}_{t_m} \right)(f(\bm{t})) \nonumber \\
 & = \bigr( \underbrace{\mc{L}_{\lambda_M}^{-1} \circ \hdots \circ \mc{L}_{\lambda_1}^{-1}}_{M-\rs{times}} \circ \underbrace{\mc{L}_{t_M} \circ \hdots \circ \mc{L}_{t_1}}_{M-\rs{times}} \bigl) ( f(\bm{t}) )
\end{align}
for functions $f(\bm{t})$ admitting (multidimensional) Laplace transforms according to Lemma \ref{lem:ltexistence}, \ref{lem:iltexistence} and Def. \ref{def:mdim_lt}, \ref{def:mdim_ilt}.

Interestingly, this approach allows for evaluating complicated functions as $\rs{vol}(\mc{P}_\bm{t})$ without resorting to costly multidimensional numerical integration. For the particular case of (P1) and a single measurement (see illustration in Fig. \ref{fig:poly1}) with 
\begin{align}
f(t)= \rs{vol}( \Delta \cap \{ \bm{x} : \bm{a}^T \bm{x} = t\})
\end{align}
we can apply the following result of Lasserre \cite{Las15b}:

\begin{lemma}[Volume of a simplex slice \cite{Las15b}]\label{lem:volsimp}
Let $\mc{S} = \{ \bm{s}_1, \hdots \bm{s}_{N+1}\} = \{\bm{e}_1, \hdots, \bm{e}_N, \bold{0}\}$ denote the vertices of the standard simplex and $\bm{a}\in \bb{S}^{N-1}$ be such that $\langle \bm{a}, \bm{s}_n \rangle \neq \langle \bm{a}, \bm{s}_{n'} \rangle$ for any pair of distinct vertices $\bm{s}_n$, $\bm{s}_{n'}$. Then, the volume of the simplex slice at any point $t \in \bb{R}$ is given by
\begin{align}\label{equ:1dpoly}
\rs{vol}\left( \mc{P}_t \right) 
= \frac{1}{(N-1)!} \sum_{n=1}^{N+1}  \frac{ (t - \langle \bm{a}, \bm{s}_n \rangle)_+^{N-1}}{ \prod_{n\prime \neq n} ( \langle \bm{a}, \bm{s}_{n\prime} \rangle - \langle \bm{a}, \bm{s}_n \rangle )}.
\end{align}
\end{lemma}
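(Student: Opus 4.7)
The plan is to apply the Laplace identity \eqref{equ:lapid} to the function $f(t) := \rs{vol}(\mc{P}_t)$, where $\mc{P}_t = \Delta \cap \{\bm{x} : \bm{a}^T \bm{x} = t\}$. Because $\bm{a}^T \bm{x}$ is bounded on $\Delta$, $f$ has compact support, so its one-sided LT $F(\lambda) = \mc{L}_t(f(t))$ exists on the whole complex plane. Since $\bm{a} \in \bb{S}^{N-1}$ is a unit vector, the coarea formula applied to the linear map $\bm{x} \mapsto \bm{a}^T \bm{x}$ introduces no Jacobian factor and yields
\begin{align*}
F(\lambda) \;=\; \int_{\bb{R}} \rs{vol}(\mc{P}_t)\exp(-\lambda t)\, dt \;=\; \int_{\Delta} \exp(-\lambda\,\bm{a}^T \bm{x})\, d\bm{x}.
\end{align*}

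The key step is to evaluate this $N$-dimensional integral in closed form as a sum over the $N{+}1$ vertices of $\Delta$. Writing $b_n := \langle \bm{a}, \bm{s}_n \rangle$, the target identity is
\begin{align*}
F(\lambda) \;=\; \sum_{n=1}^{N+1} \frac{\exp(-\lambda b_n)}{\lambda^N \prod_{n' \neq n}(b_{n'} - b_n)}.
\end{align*}
I would derive this by introducing the slack coordinate $x_{N+1} := 1 - \sum_{i \leq N} x_i$ to replace the constraint $\sum x_i \leq 1$ by a Heaviside factor, applying an auxiliary LT $\mc{L}_c$ with respect to the simplex bound so that the integrand decouples into the product $\tfrac{1}{c \prod_{i=1}^N (\lambda a_i + c)}$, and then performing a partial-fraction decomposition in $c$ followed by term-by-term inversion. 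The distinctness hypothesis $b_n \neq b_{n'}$ is precisely what ensures only simple poles in this decomposition. A more geometric alternative is to invoke Brion's theorem for the simple polytope $\Delta$, whose incident edge directions at vertex $\bm{s}_n$ are $\{\bm{s}_{n'} - \bm{s}_n\}_{n' \neq n}$, and which delivers the same vertex sum in a single step.

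Once $F(\lambda)$ is in vertex-sum form, inversion is routine: apply the standard transform pair
\begin{align*}
\frac{(t - c)_+^{N-1}}{(N-1)!} \;\laplace\; \frac{\exp(-\lambda c)}{\lambda^N}
\end{align*}
to each summand and invoke linearity of $\mc{L}_\lambda^{-1}$, which yields exactly \eqref{equ:1dpoly}. The main obstacle is the middle step -- rigorously establishing the vertex expansion of $F(\lambda)$ -- both because it requires careful sign bookkeeping (including explicit treatment of the vanishing vertex $\bm{s}_{N+1} = \bold{0}$, which produces the $b_{N+1}=0$ summand) and because it is the place where the non-degeneracy assumption on $\bm{a}$ enters in an essential way.
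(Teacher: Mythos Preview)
The paper does not supply its own proof of this lemma: it simply records that the statement is a restatement of \cite[Thm.~2.2]{Las15b}. So there is nothing to compare at the level of argument, only at the level of methodology. Your outline is in fact a faithful reconstruction of the Lasserre approach that the paper reviews in the surrounding section: compute $F(\lambda)=\int_\Delta \exp(-\lambda\,\bm{a}^T\bm{x})\,d\bm{x}$, expand it over the vertices, and invert termwise. Your ``key step'' vertex expansion is exactly Lemma~\ref{lem:baldoni_int} of the paper (with $\bm{l}=\lambda\bm{a}$), and your inversion is the combination (LT2)+(LT6) of Table~\ref{tab:tab_lts}, i.e.\ Lemma~\ref{lem:exp_invlap}.\ref{lem:exp_invlapM1}. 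So your route is the one the paper builds its later machinery on; you are not doing anything different, just spelling out what the citation encapsulates.

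One point deserves tightening. You invoke the \emph{one-sided} Laplace transform but then write $F(\lambda)=\int_{\bb{R}}\rs{vol}(\mc{P}_t)\exp(-\lambda t)\,dt$, which is the two-sided transform. For $\bm{a}\in\bb{S}^{N-1}$ with some negative entries, $f(t)=\rs{vol}(\mc{P}_t)$ can be nonzero for $t<0$, and then the one-sided LT does not capture $f$; correspondingly the pair (LT6) you use for inversion is stated in the paper only for nonnegative shift. The paper flags exactly this issue (Remark~\ref{rem:ltpos} and Section~\ref{subsec:arbitrary_support}) and resolves it via the translation identity \eqref{equ:laplace_translation_identity}: shift $t$ by $t^\star=\min_n\langle\bm{a},\bm{s}_n\rangle$ so that the support becomes nonnegative, apply the one-sided machinery, then shift back. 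Your argument becomes clean once you either (i) commit to the bilateral transform throughout (compact support makes this unproblematic and the pair $(t-c)_+^{N-1}/(N-1)!\laplace \exp(-\lambda c)/\lambda^N$ then holds for all $c\in\bb{R}$), or (ii) insert the translation step explicitly. Either fix closes the gap; as written, the mixture of one-sided language with a two-sided integral is the only real wrinkle.
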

\begin{proof}
The lemma is a restatement of Thm. 2.2 in \cite{Las15b}.
\end{proof}

\begin{figure}
\centering 
  \includegraphics[width=0.55\linewidth]{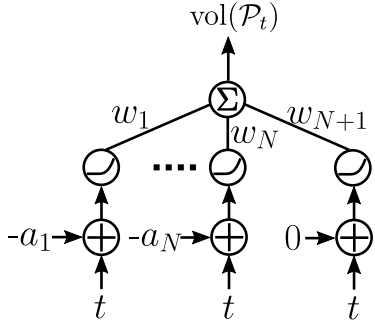}
\caption{Volume computation network with \textit{rectified polynomial} layers for input $t\in\bb{R}$.}
\label{fig:computation_tree_onelayer}
\end{figure}
On closer inspection of \eqref{equ:1dpoly}, we find that $\rs{vol}\left( \mc{P}_t \right)$ can be efficiently implemented given some fixed $\bm{a}$ and an input $t\in \bb{R}$ by a $1$-layer (shallow) neural network via a set of \textit{rectified polynomial} activation functions with shift $\{-a_1,\hdots,-a_N,0\}$ and weights $\bm{w}_n = \prod_{n\prime \neq n} ( \langle \bm{a}, \bm{s}_{n\prime} \rangle - \langle \bm{a}, \bm{s}_n \rangle )^{-1}$ as illustrated in Fig. \ref{fig:computation_tree_onelayer}. 
\begin{remark}\label{rem:tfcaffe_implem}
Even though the \textit{rectified polynomial} activation function $(t - a_n)_+^{N-1}$ is currently not implemented in common Deep Learning libraries, it can be efficiently computed via a concatenation of a conventional ReLU $(t-a_n)_+$ followed by a polynomial activation function $(t-a_n)^{N-2}$ (both supported in e.g. the Caffe or Tensorflow framework \cite{JiaSheDonKar14,Aba15}).
\end{remark}

\subsection{Extending Laplace techniques for arbitrary supports of $f(t)$}\label{subsec:arbitrary_support}
It was shown in \cite{Las15b} that \eqref{equ:1dpoly} also holds when some $a_n<0$ with $\exists t<0$ such that $f(t)>0$ preventing the direct application of Laplace transform techniques (see Rem. \ref{rem:ltpos}). The necessary extension was obtained in \cite{Las15b} essentially by introducing a translation operator $S_{{t}^*}(\cdot)$ defined by
\begin{align}\label{equ:translation_operator}
S_{{t}^*}(f({t})) := f({t}-{t}^\star)
\end{align}
and a translation identity in conjunction with \eqref{equ:lapid} given by
\begin{align}\label{equ:laplace_translation_identity}
f(t) = \left( S_{{t}^*} S_{{-t}^*} \right) ( f(t) ) = \left( S_{{t}^*} \mc{L}_\lambda^{-1} \mc{L}_t S_{{-t}^*} \right) ( f(t) ).
\end{align}
Here, the shift $t^\star \in \bb{R}$ has to be chosen such that the identity
\begin{align}\label{equ:shifted_laplace_identity}
S_{{-t}^*} (f(t)) =  \left( \mc{L}_\lambda^{-1} \mc{L}_t   S_{{-t}^*} \right) ( f(t) )
\end{align}
holds and the LT acts on a function $f$ vanishing for $t<0$, i.e., $f(t)=0$ for $t<0$.

\begin{remark}\label{rem:nonneg_Vs}
To avoid the obfuscation connected with a multidimensional extension of the shifted LT identity \eqref{equ:shifted_laplace_identity} and involved analysis of admissibility conditions we consider in the remainder of the paper only a particular case specified by the following assumption. We show by numerical simulations that similar to Lemma \ref{lem:volsimp} the neural networks to be introduced in the following indeed apply for every orthogonal basis $\bm{V}_s$.
\end{remark}
\begin{assumption}\label{ass:nonneg_Vs}
$\bm{A}$ admits a non-negative orthogonal basis $\bm{V}_s\in \bb{R}_{+}^{N \times M}$ for $\rs{ker}^{\perp}(\bm{A})$.
\end{assumption}

\begin{table}\normalsize{
\begin{center}
  \begin{tabular}{l l  l }
  	\# & $f(t)$ & $F(\lambda)$  \\[1ex] \hline
  	(LT1) &$c_1 f_1(t) + c_2 f_2(t)$ & $c_1 F_1(\lambda) + c_2 F_2(\lambda)$  \\[1ex]
  	(LT2) &$\dfrac{t^{n-1}}{(n-1)!}$ & $\dfrac{1}{\lambda^{n}}$ \\[1ex]
  	(LT3) &$\exp(at)$ & $\dfrac{1}{\lambda - a}$ \\[1ex]
  	(LT4) &$t\exp(at)$ & $\dfrac{1}{(\lambda - a)^{2}}$ \\[1ex]
  	(LT5) &$\dfrac{\exp(at)-\exp(bt)}{a-b}\ (a\neq b)$ & $\dfrac{1}{(\lambda - a)(\lambda - b)}$ \\[1ex]
  	(LT6) &$\ds{1}_+(t-a) f(t-a)\ (a\geq 0)$ & $\exp(-a\lambda) F(\lambda)$  \\[1ex]
  	(LT7) &$f'(t)$ & $\lambda F(\lambda) - f(0^+)$  
  \end{tabular}
  \end{center}
  \caption{Table of Laplace Transforms}
  \label{tab:tab_lts}
  }
  \end{table}
  
\section{Volume computation network}\label{sec:volnet}
\subsection{Theoretical Foundation}
The goal of this section is to solve (P1), i.e., to design a neural network that computes $\rs{vol}(\mc{P}_\bm{t})$ for some given $\bm{t} \in \bb{R}^M$. By Assumption \ref{ass:nonneg_Vs}, we have $\bm{V}_s \in \bb{R}_{+}^{N\times M}$, $\bm{V}_s^T \bm{V}_s = \bm{I}$. With this in hand, it may be verified that $\rs{vol}(\mc{P}_\bm{t}) = 0$ whenever there is some $t_m < 0$ and Laplace transformations including the identity \eqref{equ:lapid} can be applied. To compute $\rs{vol}(\mc{P}_\bm{t})$, we start with a lemma on exponential integrals over the simplex. This lemma will be used later on.

\begin{lemma}\label{lem:baldoni_int}
Let $\bm{l}\in\bb{R}^N$ be a linear form such that $\langle \bm{l}, \bm{s}_n \rangle \neq \langle \bm{l}, \bm{s}_{n'} \rangle$ for any pair of distinct vertices $\bm{s}_n$, $\bm{s}_{n'}\in\Delta$. Then we have
\begin{align}\label{equ:baldoni_int}
\int_\Delta \exp(-\langle \bm{l}, \bm{x} \rangle) \ d\bm{x} = \sum_{n=1}^{N+1} \frac{ \exp( - \langle \bm{l}, \bm{s}_n \rangle ) }{ \prod_{n'\neq n} \langle \bm{l}, \bm{s}_{n'} - \bm{s}_n \rangle}.
\end{align}
\end{lemma}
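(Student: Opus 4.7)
The plan is to collapse the $N$-dimensional exponential integral to a single one-dimensional integral weighted by the simplex-slice volume supplied by Lemma \ref{lem:volsimp}, and then to evaluate each scalar integral via the Laplace pairs (LT2) and (LT6) of Tab.~\ref{tab:tab_lts}. The right-hand side of \eqref{equ:baldoni_int} then emerges after a single rescaling by $\|\bm{l}\|$.

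Concretely, I would first apply the coarea formula to the Lipschitz function $g(\bm{x}) := \langle \bm{l}, \bm{x}\rangle$, whose Euclidean gradient has constant norm $|\nabla g| = \|\bm{l}\|$. Setting $\bm{a} := \bm{l}/\|\bm{l}\|\in\bb{S}^{N-1}$ and substituting $t = \|\bm{l}\|\,s$ so that the coarea prefactor $\|\bm{l}\|^{-1}$ cancels the Jacobian $dt = \|\bm{l}\|\,ds$, one obtains
\begin{equation*}
\int_\Delta e^{-\langle \bm{l},\bm{x}\rangle}\, d\bm{x} = \int_{\bb{R}} e^{-\|\bm{l}\|\,s}\,\rs{vol}(\mc{P}_s)\, ds,
\end{equation*}
where $\mc{P}_s = \Delta \cap \{\langle \bm{a},\bm{x}\rangle = s\}$ is exactly the slice of Lemma \ref{lem:volsimp}, and the nondegeneracy hypothesis on $\bm{l}$ transfers verbatim to $\bm{a}$.

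Next I would substitute the closed-form expression for $\rs{vol}(\mc{P}_s)$ from Lemma \ref{lem:volsimp}, interchange summation and integration, and evaluate each of the $N+1$ resulting one-dimensional integrals via (LT2) with $n\leftarrow N$ combined with the delay rule (LT6):
\begin{equation*}
\int_{\bb{R}} e^{-\|\bm{l}\|\,s}\,\frac{(s-\langle \bm{a},\bm{s}_n\rangle)_+^{N-1}}{(N-1)!}\, ds = \frac{e^{-\|\bm{l}\|\langle \bm{a},\bm{s}_n\rangle}}{\|\bm{l}\|^N} = \frac{e^{-\langle \bm{l},\bm{s}_n\rangle}}{\|\bm{l}\|^N}.
\end{equation*}
Absorbing the factor $\|\bm{l}\|^N$ by distributing one copy of $\|\bm{l}\|$ into each of the $N$ terms of $\prod_{n'\neq n}\langle \bm{a},\bm{s}_{n'}-\bm{s}_n\rangle$ converts every $\langle \bm{a},\cdot\rangle$ back into $\langle \bm{l},\cdot\rangle$, and the identity \eqref{equ:baldoni_int} drops out.

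The main obstacle is that the delay rule (LT6) presupposes a non-negative shift $\langle \bm{a},\bm{s}_n\rangle \geq 0$, which is automatic when $\bm{l}\in\bb{R}_+^N$ (in particular under Assumption \ref{ass:nonneg_Vs}) but not for arbitrary real $\bm{l}$. I would handle this exactly as in Sec.~\ref{subsec:arbitrary_support}: choose $t^\star \leq \min_n\langle \bm{a},\bm{s}_n\rangle$, apply the translation operator $S_{-t^\star}$ of \eqref{equ:translation_operator} so that every shift lies in $[0,\infty)$, invoke the shifted Laplace identity \eqref{equ:laplace_translation_identity} on the translated slice volume, and then undo the shift, which contributes only a global exponential factor that pulls out of the sum and cancels. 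Equivalently, since both sides of \eqref{equ:baldoni_int} are meromorphic in $\bm{l}\in\bb{C}^N$ off the resonance locus $\{\langle \bm{l},\bm{s}_n-\bm{s}_{n'}\rangle = 0\}$, analytic continuation from $\bm{l}\in\bb{R}_+^N$, where the argument above is unconditionally valid, completes the proof in full generality.
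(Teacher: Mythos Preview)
Your argument is correct, but it is a genuinely different route from what the paper does. The paper's own proof of Lemma~\ref{lem:baldoni_int} is a one-line citation: it simply invokes \cite[Cor.~12]{BalBerDeKoe11} (an exponential-sum formula for $\int_\Delta e^{\langle\bm{l},\bm{x}\rangle}\,d\bm{x}$) after flipping the sign of $\bm{l}$ and using $\rs{vol}(\Delta)=1/N!$. There is no coarea step, no appeal to Lemma~\ref{lem:volsimp}, and no use of the Laplace tables.

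What your approach buys is an internal, self-contained derivation that stays within the paper's toolkit: you recognize that the exponential simplex integral is nothing but the one-dimensional Laplace transform of the slice volume $s\mapsto\rs{vol}(\mc{P}_s)$, so Lemma~\ref{lem:volsimp} and (LT2)/(LT6) already determine it. This makes explicit the Laplace duality that underlies the whole paper and avoids importing an external result. The price is that you must handle the sign issue for $\langle\bm{a},\bm{s}_n\rangle<0$, which you do cleanly via the translation device of Sec.~\ref{subsec:arbitrary_support} (or, equivalently, by the direct substitution $u=s-\langle\bm{a},\bm{s}_n\rangle$, which shows the identity $\int_{\bb{R}} e^{-\|\bm{l}\|s}(s-c)_+^{N-1}/(N-1)!\,ds = e^{-\|\bm{l}\|c}/\|\bm{l}\|^N$ holds for every real $c$ without any analytic continuation). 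The paper's approach is shorter but outsources the work; yours is longer but more illuminating in the context of the surrounding Laplace machinery.
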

\begin{proof}
The lemma follows from \cite[Cor. 12]{BalBerDeKoe11} by changing the sign of the linear form and noting that the volume of the (full-dimensional) simplex is equal to $(N!)^{-1}$.
\end{proof}

We use this result to compute the inner $M$D-LT, cf. \eqref{equ:mdim_lt}, in \eqref{equ:lapid}. To this end, let $\mc{T}_\bm{t}:=~\Delta \cap \{\bm{x}:~\bm{V}_s^T\bm{x}\leq\bm{t}\}$ be the intersection of the simplex with $M$ halfspaces, and consider
\begin{align}
F(\bg{\lambda}) = F(\lambda_1,\hdots,\lambda_M) = \bigr( \underbrace{\mc{L}_{t_M} \circ \hdots \circ \mc{L}_{t_1}}_{M-\rs{times}} \bigl) \left( \rs{vol}(\mc{T}_\bm{t} ) \right).
\end{align}
By the definition of the $M$D-LT \eqref{equ:mdim_lt} and the fact that $\bb{R} \to \bb{R}_+: x_k \to \exp(x_k)$ is non-negative, the Fubini-Tonelli theorem \cite[Th. 9.11]{Tes14} implies that, for $\rs{Re}(\bg{\lambda}) > \bold{0}$, we have
\begin{align}\label{equ:lapvolTt}
F(\bg{\lambda}) & = \int_{\bb{R}_+^M} \exp(-\langle \bg{\lambda}, \bm{t} \rangle) \left( \int_{\Delta \cap \{\bm{x} : \bm{V}_s^T \bm{x} \leq \bm{t}\}} 1 \ d\bm{x} \right) \ d\bm{t} \nonumber \\
& = \int_{\Delta} \left( \int_{[\bm{v}_{1}^T \bm{x}, \infty) \times \hdots \times [\bm{v}_{M}^T \bm{x}, \infty)} \exp(-\langle \bg{\lambda}, \bm{t} \rangle) \ d\bm{t} \right) \ d\bm{x} \nonumber \\
& = \frac{1}{\prod_{m=1}^M \lambda_m} \int_{\Delta} \exp(-\langle \bm{V}_s \bg{\lambda}, \bm{x} \rangle) \ d\bm{x} .
\end{align}
So applying the Laplace identity \eqref{equ:lapid} with $f(\bm{t})=\rs{vol}(\mc{T}_\bm{t})$ to \eqref{equ:lapvolTt} shows that
\begin{align}\label{equ:volTt}
\rs{vol}(\mc{T}_\bm{t}) = \left( \circ_{m} \mc{L}_{\lambda_m}^{-1} \right) \left(\frac{1}{\prod_{m} \lambda_m} \int_{\Delta} \exp(-\langle \bm{V}_s \bg{\lambda}, \bm{x} \rangle) \ d\bm{x} \right),
\end{align}
whenever the $M$D-ILT \eqref{equ:mdim_ilt} on the RHS exists. In this case, the RHS is a function of the transform variable $\bm{t}$ and the inner simplex integral may be evaluated via Lemma \ref{lem:baldoni_int} if the corresponding condition holds. To obtain $\rs{vol}(\mc{P}_\bm{t})$, we use \eqref{equ:lapvolTt} to establish the following proposition.

\begin{proposition}\label{prop:volPt}
Let $\bm{V}_s$ be a non-negative orthogonal basis ($\bm{V}_s\in\bb{R}_+^{N\times M}$, $\bm{V}_s^T \bm{V}_s = \bm{I}_M$) and $\mc{P}_\bm{t}:=\Delta \cap \{\bm{x}: \bm{V}_s^T \bm{x} = \bm{t}\}$. Then, we have
\begin{align}\label{equ:lapvolPt}
\rs{vol}(\mc{P}_\bm{t}) = \left( \circ_{m=1}^M \mc{L}_{\lambda_m}^{-1} \right) \left( \int_{\Delta} \exp(-\langle \bm{V}_s \bg{\lambda}, \bm{x} \rangle) \ d\bm{x} \right)
\end{align}
provided that the integrals on the RHS exist.
\end{proposition}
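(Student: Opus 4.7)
The plan is to compute the $M$-dimensional Laplace transform of $\rs{vol}(\mc{P}_\bm{t})$ directly, identify it with the inner integral on the right-hand side of \eqref{equ:lapvolPt}, and then invert via the Laplace identity \eqref{equ:lapid}. The key geometric step is an isometric disintegration of the Lebesgue measure on $\Delta$ along the projection $\bm{x}\mapsto \bm{V}_s^T\bm{x}$: extending $\bm{V}_s$ to the orthogonal matrix $\bm{V}=[\bm{V}_s~\bm{V}_0]$ from \eqref{equ:svd}, every $\bm{x}\in\bb{R}^N$ decomposes uniquely as $\bm{x}=\bm{V}_s\bm{t}+\bm{V}_0\bm{s}$ with $\bm{t}=\bm{V}_s^T\bm{x}$, and the change of coordinates $\bm{x}\mapsto(\bm{t},\bm{s})$ has unit Jacobian. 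Since $\bm{s}\mapsto \bm{V}_s\bm{t}+\bm{V}_0\bm{s}$ parametrizes the affine subspace containing $\mc{P}_\bm{t}$ isometrically, I would conclude that for every measurable $g:\bb{R}^M\to\bb{R}$ for which the integrals make sense,
\[
\int_{\Delta} g(\bm{V}_s^T\bm{x})\, d\bm{x} \;=\; \int_{\bb{R}^M} g(\bm{t})\, \rs{vol}(\mc{P}_\bm{t})\, d\bm{t}.
\]

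Next I would invoke Assumption \ref{ass:nonneg_Vs}: because $\bm{V}_s\in\bb{R}_+^{N\times M}$ and $\Delta\subset\bb{R}_+^N$, we have $\bm{V}_s^T\bm{x}\in\bb{R}_+^M$ for all $\bm{x}\in\Delta$, so $\mc{P}_\bm{t}=\emptyset$ (hence $\rs{vol}(\mc{P}_\bm{t})=0$) whenever any $t_m<0$. This collapses the right-hand side to an integral over $\bb{R}_+^M$, placing us in the setting of the one-sided LT (see Rem.\ \ref{rem:ltpos}). Substituting $g(\bm{t})=\exp(-\langle\bg{\lambda},\bm{t}\rangle)$ with $\rs{Re}(\bg{\lambda})>\bold{0}$ gives a nonnegative integrand so that the Fubini--Tonelli theorem applies exactly as in \eqref{equ:lapvolTt}. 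By Def.\ \ref{def:mdim_lt}, the resulting identity reads
\[
\int_{\Delta}\exp(-\langle\bm{V}_s\bg{\lambda},\bm{x}\rangle)\, d\bm{x} \;=\; \bigl(\mc{L}_{t_M}\circ\cdots\circ\mc{L}_{t_1}\bigr)\!\bigl(\rs{vol}(\mc{P}_\bm{t})\bigr),
\]
i.e., the inner expression in \eqref{equ:lapvolPt} is the $M$D-LT of $\rs{vol}(\mc{P}_\bm{t})$. Applying the $M$D-ILT from the left and invoking \eqref{equ:lapid} then delivers the claim, under the standing proviso that the iterated inverse transform exists.

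The main obstacle I anticipate is making the disintegration/coarea step rigorous while keeping the exposition accessible: a clean derivation through the orthogonal extension $[\bm{V}_s~\bm{V}_0]$ avoids invoking Hausdorff measures and stays within the paper's notation. A shorter alternative is to build on the already-derived \eqref{equ:lapvolTt}: the elementary Fubini identity $\rs{vol}(\mc{T}_\bm{t})=\int_0^{t_1}\!\cdots\!\int_0^{t_M}\rs{vol}(\mc{P}_{\bg{\tau}})\, d\bg{\tau}$ (valid because $\mc{P}_{\bg{\tau}}=\emptyset$ for $\bg{\tau}\notin\bb{R}_+^M$) translates under the $M$D-LT into multiplication by $1/\prod_{m}\lambda_m$, which is exactly the factor appearing in \eqref{equ:lapvolTt}; cancelling it and inverting yields \eqref{equ:lapvolPt}.
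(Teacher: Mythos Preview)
Your argument is correct, but it takes a different route from the paper. The paper does not compute the $M$D-LT of $\rs{vol}(\mc{P}_\bm{t})$ directly; instead it invokes \cite[Prop.~3.3]{Las83} to write $\rs{vol}(\mc{P}_\bm{t})=\lVert\bm{v}_1\rVert_2\cdots\lVert\bm{v}_M\rVert_2\,\frac{\partial^M}{\partial t_1\cdots\partial t_M}\rs{vol}(\mc{T}_\bm{t})$, then feeds this into the already-established transform \eqref{equ:lapvolTt} of $\rs{vol}(\mc{T}_\bm{t})$ via the differentiation rule (LT7), using $\lim_{t_m\to 0}\rs{vol}(\mc{T}_\bm{t})=0$ to kill the boundary terms. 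Your main approach bypasses this external lemma by carrying out the coarea/disintegration step explicitly through the orthogonal extension $[\bm{V}_s\ \bm{V}_0]$, which is more self-contained and makes the role of the orthonormality hypothesis (unit Jacobian, isometric slice parametrization) transparent. Your ``shorter alternative'' is essentially the integral dual of the paper's differentiation argument: the identity $\rs{vol}(\mc{T}_\bm{t})=\int_0^{t_1}\!\cdots\!\int_0^{t_M}\rs{vol}(\mc{P}_{\bg{\tau}})\,d\bg{\tau}$ is precisely what the paper's differentiation relation asserts, so once you justify that identity (which again needs the disintegration step or the Lasserre reference), cancelling the $1/\prod_m\lambda_m$ factor in \eqref{equ:lapvolTt} is equivalent to the paper's repeated use of (LT7). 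Either way the conclusion is the same; your disintegration route buys independence from \cite{Las83}, while the paper's route reuses \eqref{equ:lapvolTt} with one line of transform calculus.
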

\begin{proof}
The proof is deferred to Appendix \ref{app:appA}.
\end{proof}

Now let us turn our attention to the numerical evaluation of $\rs{vol}(\mc{P}_\bm{t})$ via the $M$D-ILT \eqref{equ:mdim_ilt}. To this end, we first evaluate the inner integral over the simplex using Lemma \ref{lem:baldoni_int} to obtain
\begin{align}\label{equ:volPt_baldoni}
\rs{vol}(\mc{P}_\bm{t}) = \left( \circ_{m=1}^M \mc{L}_{\lambda_m}^{-1} \right) \left( \sum_{n=1}^N \frac{ \exp( -\langle \bm{V}_s \bg{\lambda}, \bm{s}_n \rangle ) }{ \prod_{n'\neq n} \langle \bm{V}_s \bg{\lambda}, \bm{s}_{n'} - \bm{s}_n \rangle } \right).
\end{align}

Given that the argument of the $M$D-ILT in \eqref{equ:volPt_baldoni} is a sum of \textit{exponential-over-polynomial} (\textit{exp-over-poly}) functions we continue with a lemma on corresponding one-dimensional transform pairs.

\begin{lemma}[ILT of an \textit{exp-over-poly} function]\labeli[lem:exp_invlap]
\hfill 
\begin{enumerate}[1)]
\item\labelii[lem:exp_invlapM1] {Let $M=1$, $a\geq 0$, $\bm{b}\in \bb{R}_{\neq 0}^N$. Then we have the transform pair ($F(\lambda) \Laplace f(t)$)
\begin{align}\label{equ:ilt_exppoly_M1}
\frac{\exp( -a \lambda )}{ \prod_{n=1}^N (b_n \lambda) } \ \overset{ a \geq 0}{\Laplace} \ \frac{ (t - a)_+^{N-1} }{(N-1)! \prod_{n=1}^N b_n }.
\end{align}}

\item\labelii[lem:exp_invlapM2] {Let $M\geq 2$, $\bm{a} \in \bb{R}^{M}$ and $\bm{B}_{:,1} \in \bb{R}_{\neq 0}^N$, $\bm{B}_{:,2:M} \in \bb{R}^{N \times M-1}$ with pairwise linearly independent rows. Then we have the transform pair $F(\lambda_1,\hdots,\lambda_M) \Laplace f(t_1,\lambda_2,\hdots,\lambda_M)$

\begin{align}\label{equ:ilt_exppoly_M2}
\frac{\exp( -\langle \bm{a}, \bg{\lambda} \rangle )}{ \prod_{n=1}^N [\bm{B} \bg{\lambda}]_n }  \overset{ a_1 \geq 0}{\Laplace}  \frac{ \ds{1}_+(t_1 - a_1) }{\prod_{n} B_{n,1} } \sum_{n=1}^N \frac{ \exp( - \langle \bm{a}^{(n)} , \bg{\lambda}_{2:M} \rangle ) } {\prod_{n'=1}^{N-1} [\bm{B}^{(n)} \bg{\lambda}_{2:M}]_{n'}}.
\end{align}
Setting $\bm{C} := \bm{B} \oslash (\bm{B}_{:,1} \bold{1}^T)$ we obtain $\bm{a}^{(n)}\in \bb{R}^{M-1}$ and $\bm{B}^{(n)} \in \bb{R}^{N-1 \times M-1}$ ($n\in\{1,\hdots,N\})$ by
\begin{align}
\bm{a}^{(n)} & = \bm{a}_{2:M} + (t_1 - a_1)\bm{C}_{n,2:M}, \\
\bm{B}^{(n)} & = \bm{C}_{1:N\setminus n,2:M}  - \bold{1} \bm{C}_{n,2:M}.
\end{align}
}

\end{enumerate}
\end{lemma}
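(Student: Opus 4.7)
The plan is to handle the first claim as a direct application of the standard transform table in Tab.~\ref{tab:tab_lts}, and the second claim by applying only the inverse transform with respect to $\lambda_1$ (treating $\lambda_2,\ldots,\lambda_M$ as parameters) after a partial fraction decomposition that is enabled by the pairwise linear independence of the rows of $\bm{B}_{:,2:M}$.

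For the first part ($M=1$), I would factor the denominator as $\prod_{n=1}^N (b_n\lambda) = \lambda^N \prod_{n=1}^N b_n$, pull out the scalar $\prod_n b_n$, and combine entry (LT2) of Tab.~\ref{tab:tab_lts} (with its index set to $N$) with the shift rule (LT6); the latter applies precisely because $a \geq 0$. The result is $(t-a)_+^{N-1}/\bigl((N-1)!\prod_n b_n\bigr)$, as claimed.

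For the second part, since $B_{n,1}\neq 0$ for every $n$ I rewrite each factor of the denominator as
\begin{align*}
[\bm{B}\bg{\lambda}]_n = B_{n,1}\bigl(\lambda_1 + \alpha_n\bigr), \qquad \alpha_n := \langle \bm{C}_{n,2:M}, \bg{\lambda}_{2:M}\rangle,
\end{align*}
with $\bm{C} = \bm{B} \oslash (\bm{B}_{:,1}\bold{1}^T)$, so that the prefactor $(\prod_n B_{n,1})^{-1}$ separates out. Splitting $\exp(-\langle\bm{a},\bg{\lambda}\rangle) = \exp(-a_1\lambda_1)\exp(-\langle\bm{a}_{2:M},\bg{\lambda}_{2:M}\rangle)$ and noting that the second factor is inert under $\mc{L}_{\lambda_1}^{-1}$, what remains is to invert $\exp(-a_1\lambda_1)/\prod_n(\lambda_1+\alpha_n)$. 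Pairwise linear independence of the rows of $\bm{B}_{:,2:M}$ ensures that the $\alpha_n$ are pairwise distinct linear forms in $\bg{\lambda}_{2:M}$, so the standard partial fraction expansion
\begin{align*}
\frac{1}{\prod_n(\lambda_1+\alpha_n)} = \sum_{n=1}^N \frac{1}{\lambda_1+\alpha_n}\prod_{n'\neq n}\frac{1}{\alpha_{n'}-\alpha_n}
\end{align*}
is valid. Applying (LT3) with $a=-\alpha_n$ to each simple pole and then (LT6) for the shift by $a_1\geq 0$ gives
\begin{align*}
\ds{1}_+(t_1-a_1)\sum_{n=1}^N \frac{\exp(-\alpha_n(t_1-a_1))}{\prod_{n'\neq n}(\alpha_{n'}-\alpha_n)}.
\end{align*}
Reintroducing the factor $\exp(-\langle\bm{a}_{2:M},\bg{\lambda}_{2:M}\rangle)$ and absorbing $\exp(-\alpha_n(t_1-a_1))$ into the exponent, I identify
\begin{align*}
\bm{a}^{(n)} &= \bm{a}_{2:M} + (t_1-a_1)\bm{C}_{n,2:M},\\
[\bm{B}^{(n)}\bg{\lambda}_{2:M}]_{n'} &= \alpha_{n'}-\alpha_n = \langle \bm{C}_{n',2:M}-\bm{C}_{n,2:M},\bg{\lambda}_{2:M}\rangle,
\end{align*}
which reproduces the stated formulas after relabelling the indices $n'\in\{1,\ldots,N\}\setminus\{n\}$ as $n'\in\{1,\ldots,N-1\}$.

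The main obstacle is verifying distinctness of the poles $-\alpha_n$, which is precisely the role of the linear-independence hypothesis: if $\alpha_n=\alpha_{n'}$ as linear forms then $\bm{B}_{n',2:M} = (B_{n',1}/B_{n,1})\,\bm{B}_{n,2:M}$, contradicting linear independence of these two rows. Absolute convergence of $\mc{L}_{\lambda_1}$ on a suitable right half-plane, together with $a_1\geq 0$, legitimizes applying the Laplace identity \eqref{equ:lapid} only in the first coordinate and sidesteps the extension discussed in Sec.~\ref{subsec:arbitrary_support}.
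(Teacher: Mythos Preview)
Your proposal is correct and follows essentially the same route as the paper's proof: for part~1) the paper likewise combines (LT2), (LT6) and linearity, and for part~2) it pulls out $\prod_n B_{n,1}$, performs the partial fraction expansion in $\lambda_1$ with poles $\lambda_n^{(0)}=-\alpha_n$, applies (LT3) and (LT6), and then identifies the resulting expressions with $\bm{a}^{(n)}$ and $\bm{B}^{(n)}$. Your explicit justification that the linear-independence hypothesis on the rows of $\bm{B}_{:,2:M}$ forces the $\alpha_n$ to be pairwise distinct is a welcome detail that the paper leaves implicit.
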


\begin{remark}\label{rem:trunc_concat}
We highlight that in case ${B}_{k,1} = 0$ for some $k$ we can treat the corresponding factor $\bm{B}_{k,:} \bg{\lambda} = \bm{B}_{k,2:M} \bg{\lambda}_{2:M}$ as a constant w.r.t. the ILT $\mc{L}_{\lambda_1}^{-1}(\cdot )$. Accordingly, we can apply \eqref{equ:ilt_exppoly_M2} using the truncated matrix $\bt{B} = \bm{B}_{1:N \setminus k,:}$, where $\bt{B}_{:,1} \in \bb{R}_{\neq 0}^{N-1}$, and obtain the truncated output $\bt{B}^{(n)} \in \bb{R}^{N-1 \times M-1}$. For the subsequent ILT we have to include the corresponding factor $\bm{B}_{k,2:M} \bg{\lambda}$ again by applying the matrix concatenation
\begin{align}
\bm{B}^{(n)} = \begin{bmatrix} \bt{B}^{(n)} \\ \bm{B}_{k,2:M} \end{bmatrix}.
\end{align}
\end{remark}

\subsection{Structure of the network}
\begin{figure}
\centering 
  \includegraphics[width=\linewidth]{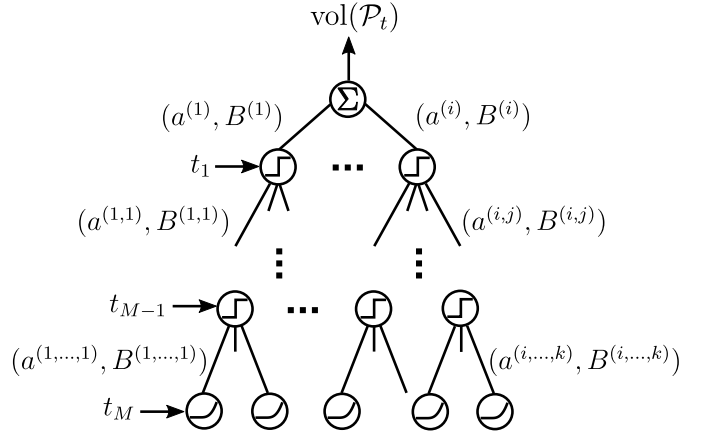}
\caption{Volume network with \textit{rectified polynomial} and \textit{threshold} layers for input $\bm{t}\in \bb{R}^M$.}
\label{fig:computation_tree}
\end{figure}
Now we are in a position to present a computation network for the $M$D-ILT \eqref{equ:lapvolPt} by an algorithm that can be described by a  computational tree (see \cite{LasZer01b}). Due to the particular form of RHS in \eqref{equ:baldoni_int} the root of this tree is the $M$D-ILT of $N+1$ \textit{exp-over-poly}-functions
\begin{align}
(\bm{a}^{(i)},\bm{B}^{(i)}) \to \frac{ \exp(-\langle \bm{a}^{(i)}, \bg{\lambda} \rangle) }{ \prod_{n'} [\bm{B}^{(i)} \bg{\lambda}]_{n'} }.
\end{align}
\begin{enumerate}
\item In layer $m=1$, each node computes the ILT $F(\lambda_1,\hdots,\lambda_M) \Laplace f(t_1,\lambda_2,\hdots,\lambda_M)$ of an \textit{exp-over-poly}-function with parameters $(\bm{a}^{(i)},\bm{B}^{(i)})$ inherited from the root node. If the branch is active, i.e. $\ds{1}_+(t_1 - a_1^{(i)})(\prod_n B_{n,1}^{(i)})^{-1} \neq 0$, it applies transform \eqref{equ:ilt_exppoly_M2} and passes the corresponding $N$ exp-over-poly functions $( \bm{a}^{(i,j)},\bm{B}^{(i,j)})$, $(i,j)\in \{1,\hdots,N+1\}\times\{1,\hdots,N\}$ to its children nodes.
\item In layer $m=2$, each node computes the ILT $f(t_1,\lambda_2,\hdots,\lambda_M) \Laplace f(t_1,t_2,\lambda_3,\hdots,\lambda_M)$ of an \textit{exp-over-poly}-function inherited from its parent node.
If the branch is active, i.e. $\ds{1}_+(t_2 - a_1^{(i,j)})(\prod_n B_{n,1}^{(i,j)})^{-1} \neq 0$, it applies transform \eqref{equ:ilt_exppoly_M2} and passes the corresponding $N-1$ exp-over-poly functions $( \bm{a}^{(i,j,k)},\bm{B}^{(i,j,k)})$, $(i,j,k)\in \{1,\hdots,N+1\}\times\{1,\hdots,N\}\times\{1,\hdots,N-1\}$ to its children nodes.
\item ...
\item In the last layer $m=M$, each node computes the ILT $f(t_1,\hdots,t_{M-1},\lambda_M) \Laplace f(t_1,\hdots,t_M)$ of an \textit{exp-over-poly}-function inherited from its parent node. If the branch is active, i.e. $\ds{1}_+(t_M - a_1^{(i,j,\hdots)}) ( (N-1)!(\prod_n B_{n,1}^{(i,j,\hdots)}) )^{-1} \neq 0$, it applies transform \eqref{equ:ilt_exppoly_M1} and obtains a numerical value.
\item The final result is obtained by summing and weighting all values of the last layer.
\end{enumerate}
The described computational tree is equivalent to a deep neural network with $M$ layers, weights $\prod_n {B}_{n,1}^{(i,j,\hdots)}$ 
and \textit{threshold} as well as \textit{rectified polynomial} activation functions (see Fig. \ref{fig:computation_tree}).

\subsection{Volume computation example}
In the following we present a numerical example for the computation of $\rs{vol}(\mc{P}_\bm{t})$ via the described computation network and repeated application of Lemma \ref{lem:exp_invlap}. 
\begin{figure}
\centering 
  \includegraphics[width=0.9\linewidth]{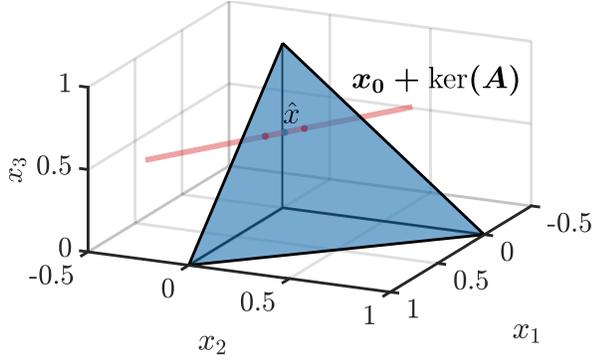}
\caption{Intersection polytope (line) and centroid for $\bm{t} = [0.5;0.933]$ and $\bm{A}=\bm{V}_s^T= [0,0,1;0.5,0.866,0]$.}
\label{fig:poly3}
\end{figure}

\begin{example}[Volume computation]
Assume $M=2$, $N=3$ and let $\mc{P}_\bm{t}$ be defined by the measurements (see corresponding illustration in Fig. \ref{fig:poly3})
\begin{align}
\bm{t} &= \bm{V}_s^T \bm{x} \nonumber \\
\begin{bmatrix} 0.5 \\ 0.0933 \end{bmatrix} &= \begin{bmatrix} 0 & 0 & 1 \\ 0.5 & 0.866 & 0 \end{bmatrix} \begin{bmatrix} 0.1 \\ 0.05 \\ 0.5 \end{bmatrix}.
\end{align}
The exp-over-poly functions at the root node are obtained by \eqref{equ:baldoni_int} and defined by the following parameters:
\begin{align*}
\bm{a}^{(1)} & = \begin{bmatrix} 0 \\ 0.5 \end{bmatrix}, \ \bm{a}^{(2)} = \begin{bmatrix} 0 \\ 0.866 \end{bmatrix}, \
\bm{a}^{(3)} = \begin{bmatrix} 1 \\ 0 \end{bmatrix} \ \bm{a}^{(4)} = \begin{bmatrix} 0 \\ 0 \end{bmatrix} \\
\bm{B}^{(1)} & = \begin{bmatrix} 0 & 0.366 \\ 1 & -0.5 \\ 0 & -0.5 \end{bmatrix}, \ \bm{B}^{(2)} = \begin{bmatrix} 0 & -0.366 \\ 1 & -0.866 \\ 0 & -0.866 \end{bmatrix}, \\
\bm{B}^{(3)} & = \begin{bmatrix} -1 & 0.5 \\ -1 & 0.866 \\ -1 & 0 \end{bmatrix}, \ \bm{B}^{(4)} = \begin{bmatrix} 0 & 0.5 \\ 0 & 0.866 \\ 1 & 0 \end{bmatrix}.
\end{align*}
The next step is to compute the ILT with respect to $\lambda_1$, i.e.,
\begin{align}\label{equ:example_ilt}
\mc{L}_{\lambda_1}^{-1} \left( \sum_{i=1}^{4} \frac{ \exp( -\langle \bm{a}^{(i)}, \bg{\lambda} \rangle) }{ \prod_{n=1}^{N} [\bm{B}^{(i)} \bg{\lambda} ]_{n} } \right). 
\end{align}
Applying \eqref{equ:ilt_exppoly_M2} to \eqref{equ:example_ilt} and using truncation and concatenation when required (see Rem. \ref{rem:trunc_concat}) yields
\begin{align*}
\bm{a}^{(1,1)} &= 0.25, \ 
\bm{a}^{(2,1)} = 0.433, \
\bm{a}^{(3,1)} = 0.25, \\
\bm{a}^{(3,2)} &= 0.433, \
\bm{a}^{(3,3)} = 0, \
\bm{a}^{(4,1)} = 0, \\
\bm{B}^{(1,1)} &= \begin{bmatrix} 0.366 \\ -0.5 \end{bmatrix}, \
\bm{B}^{(2,1)} = \begin{bmatrix} -0.366 \\ -0.866 \end{bmatrix}, \
\bm{B}^{(3,1)} = \begin{bmatrix} -0.366 \\ 0.5 \end{bmatrix}, \\
\bm{B}^{(3,2)} &= \begin{bmatrix} 0.366 \\ 0.866 \end{bmatrix}, \
\bm{B}^{(3,3)} = \begin{bmatrix} -0.5 \\ -0.866 \end{bmatrix}, \
\bm{B}^{(4,1)} = \begin{bmatrix} 0.5 \\ 0.866 \end{bmatrix}.
\end{align*}
Accordingly, the outputs of the last layer in the computational tree are given by
\begin{align*}
f^{(i,j)} & = \frac{ (t_2 - a_1^{(i,j)})_+}{\prod_n B_{n,1}^{(i,j)}} \cdot \frac{ \ds{1}_+(t_1 - a_1^{(i)}) }{\prod_{n'} B_{n',1}^{(i)}} \\
f^{(1,1)} & = 0, \ f^{(2,1)} = 0,\ f^{(3,1)} = 0 \\
f^{(3,2)} & = 0, \ f^{(3,3)} = 0,\ f^{(4,1)} = 0.2155,
\end{align*}
where index $n' \in \rs{supp}(\bm{B}_{:,1}^{(i)})$ and the final result is given by $\rs{vol}(\mc{P}_\bm{t})=0.2155$.
\end{example}

\section{Moment computation network}\label{sec:centnet}
The final step towards the optimal estimator of Lemma \ref{lem:centest} is to compute the moment vector $\bg{\mu} = \int_{\mc{P}_\bm{t}} \bm{x} \ d\varrho_\mc{P}$. To this end, we introduce an extension of Lemma \ref{lem:baldoni_int} as well as a conjecture that was verified numerically but currently lacks a rigorous proof.

\begin{lemma}\label{lem:baldoni_extended}
Let $\bm{l}$ and $\Delta$ be as in Lemma \ref{lem:baldoni_int}. Then we have
\begin{align}\label{equ:baldoni_extended}
& \int_{\Delta} x_k \exp(-\langle \bm{l}, \bm{x} \rangle) \ d\bm{x}  \nonumber \\
& = \frac{ \exp( - \langle \bm{l}, \bm{s}_k \rangle )}{ \prod_{n \neq k} \langle \bm{l}, \bm{s}_{n} - \bm{s}_k \rangle } + \sum_{\substack{n=1 \\ n\neq k}}^{N+1} \frac{ \exp( -\langle \bm{l}, \bm{s}_{n}  \rangle ) }{ \langle \bm{l}, \bm{s}_k - \bm{s}_n \rangle \prod_{n' \neq n} \langle \bm{l}, \bm{s}_{n'} - \bm{s}_n \rangle } \nonumber \\
& - \sum_{\substack{n = 1 \\ n \neq k}}^{N+1} \frac{ \exp( -\langle \bm{l}, \bm{s}_{k}  \rangle ) }{ \langle \bm{l}, \bm{s}_n - \bm{s}_k \rangle \prod_{n' \neq k} \langle \bm{l}, \bm{s}_{n'} - \bm{s}_k \rangle }. \nonumber \\
\end{align}
\end{lemma}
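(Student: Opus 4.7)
My plan is to prove the identity by differentiating the formula of Lemma \ref{lem:baldoni_int} with respect to $l_k$ and then carefully regrouping terms using the explicit structure of the simplex vertices $\bm{s}_n$. The starting observation is that
\begin{align*}
\int_\Delta x_k \exp(-\langle \bm{l}, \bm{x}\rangle)\, d\bm{x} = -\frac{\partial}{\partial l_k}\int_\Delta \exp(-\langle \bm{l}, \bm{x}\rangle)\, d\bm{x},
\end{align*}
which is legitimate because the integrand is smooth and $\Delta$ is compact, so differentiation under the integral sign applies. Thus we may differentiate the closed form provided by Lemma \ref{lem:baldoni_int} term-by-term.

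Next I would compute $-\partial/\partial l_k$ of each summand $\exp(-\langle \bm{l}, \bm{s}_n\rangle)/\prod_{n'\neq n}\langle \bm{l}, \bm{s}_{n'}-\bm{s}_n\rangle$ using the quotient rule. The numerator contributes a factor $s_{n,k}$ times the original summand, while the denominator contributes
\begin{align*}
\frac{\exp(-\langle \bm{l}, \bm{s}_n\rangle)}{\prod_{n'\neq n}\langle \bm{l}, \bm{s}_{n'}-\bm{s}_n\rangle}\sum_{n'\neq n}\frac{s_{n',k}-s_{n,k}}{\langle \bm{l}, \bm{s}_{n'}-\bm{s}_n\rangle}.
\end{align*}
Summing over $n\in\{1,\dots,N+1\}$ yields two double sums.

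The key simplification comes from exploiting $\bm{s}_n=\bm{e}_n$ for $n\le N$ and $\bm{s}_{N+1}=\bold{0}$, so that $s_{n,k}=\delta_{n,k}$. The ``numerator'' contribution collapses to a single surviving term with $n=k$, producing the first term of the claimed identity. The ``denominator'' contribution splits into three cases: (i) $n=k$, which forces $s_{n',k}=0$ and $s_{n,k}=1$ for all $n'\neq k$, generating the third (negative) sum of the identity; (ii) $n\neq k$ and $n'=k$, which gives $s_{n',k}-s_{n,k}=1$ and reproduces the middle sum; (iii) $n\neq k$ and $n'\neq k$, where both indicator values vanish, contributing zero. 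Matching each surviving piece against the right-hand side of \eqref{equ:baldoni_extended} completes the proof.

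The main obstacle is purely bookkeeping: the product rule generates a double sum, and one must check that after specializing the vertex coordinates, exactly three structurally distinct blocks survive and that their signs and index ranges line up with the three terms in the stated formula. No deep analytic estimates are needed; justifying differentiation under the integral is routine given the compactness of $\Delta$ and the nondegeneracy assumption $\langle \bm{l}, \bm{s}_n\rangle\neq \langle \bm{l}, \bm{s}_{n'}\rangle$ inherited from Lemma \ref{lem:baldoni_int}, which ensures the closed form and its partial derivative are well defined.
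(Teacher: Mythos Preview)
Your proposal is correct and follows exactly the paper's approach: the paper's proof also differentiates under the integral sign with respect to $l_k$, applies Lemma~\ref{lem:baldoni_int}, and then simply states that ``carrying out the differentiation yields the desired result.'' Your write-up goes further by spelling out the quotient-rule computation and the case analysis based on $s_{n,k}=\delta_{n,k}$, which the paper omits; the reasoning and the three surviving blocks you identify are precisely what that omitted computation produces.
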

\begin{proof}
The proof is deferred to Appendix \ref{app:appC}.
\end{proof}

\begin{conjecture}\label{conj:centPt}
Let $\mc{P}_{\bm{t}}$ and $\bm{V}_s$ be as in Prop. \ref{prop:volPt}. Then, the moment $\mu_k:=\int_{\mc{P}_\bm{t}} x_k \ d\varrho_\mc{P} = \rs{vol}(\mc{P}_\bm{t}) \hat{x}_k$ (see \eqref{equ:centest}) is given by
\begin{align}\label{equ:lapmomPt}
\mu_k = \left( \circ_{m} \mc{L}_{\lambda_m}^{-1} \right) \left( \int_{\Delta} x_k \exp(-\langle \bm{V}_s \bg{\lambda}, \bm{x} \rangle) \ d\bm{x} \right),
\end{align}
provided that the integrals on the RHS exist.
\end{conjecture}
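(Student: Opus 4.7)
The plan is to mirror the proof of Proposition \ref{prop:volPt} essentially verbatim, inserting the factor $x_k$ into every integrand. The motivating observation is that Proposition \ref{prop:volPt} handles $\rs{vol}(\mc{T}_{\bm{t}}) = \int_{\mc{T}_{\bm{t}}} 1 \, d\bm{x}$ and then passes to $\rs{vol}(\mc{P}_{\bm{t}})$ by differentiating in $\bm{t}$; since $x_k$ depends on $\bm{x}$ and not $\bm{t}$, the same two-step procedure should carry through unchanged when $1$ is replaced by $x_k$.

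First I would introduce the auxiliary sublevel-set moment
\[
M_k(\bm{t}) := \int_{\mc{T}_{\bm{t}}} x_k \, d\bm{x}, \qquad \mc{T}_{\bm{t}} = \Delta \cap \{\bm{x}: \bm{V}_s^T \bm{x} \leq \bm{t}\},
\]
which is a full-dimensional integral over $\mc{T}_{\bm{t}}$ and is thus accessible to the same $M$D-LT manipulations as in \eqref{equ:lapvolTt}. Since $x_k \geq 0$ on $\Delta$ and $\exp(-\langle \bg{\lambda}, \bm{t} \rangle) \geq 0$ for $\rs{Re}(\bg{\lambda}) > \bold{0}$, Fubini--Tonelli applies exactly as in the derivation of \eqref{equ:lapvolTt}. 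Swapping the order of integration and performing the inner $\bm{t}$-integral over $[\bm{v}_{1}^T\bm{x},\infty) \times \cdots \times [\bm{v}_{M}^T\bm{x},\infty)$ yields
\[
(\circ_{m=1}^M \mc{L}_{t_m})(M_k(\bm{t})) = \frac{1}{\prod_{m=1}^M \lambda_m} \int_{\Delta} x_k \exp(-\langle \bm{V}_s \bg{\lambda}, \bm{x} \rangle) \, d\bm{x}.
\]
Applying the Laplace identity \eqref{equ:lapid} then represents $M_k(\bm{t})$ as the $M$D-ILT of the right-hand side, under the stated hypothesis that the inverse transforms exist.

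Next I would pass from $M_k$ to $\mu_k$ via the coarea formula. Because $\bm{V}_s^T$ has orthonormal rows (Assumption \ref{ass:nonneg_Vs} and $\bm{V}_s^T \bm{V}_s = \bm{I}_M$), the associated Jacobian factor is unity, hence
\[
M_k(\bm{t}) = \int_{-\infty}^{t_1}\!\!\cdots\int_{-\infty}^{t_M} \mu_k(\bm{s}) \, ds_M \cdots ds_1,
\]
so $\mu_k(\bm{t}) = \partial^M M_k/(\partial t_1 \cdots \partial t_M)$ a.e. Invoking the Laplace derivative rule (LT7) $M$ times cancels the factor $\prod_m \lambda_m$ in the denominator and produces \eqref{equ:lapmomPt}, provided that at each iteration the corresponding boundary term $M_k|_{t_m = 0^+}$ (and its partially differentiated analogues) vanishes. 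These boundary terms do vanish: under Assumption \ref{ass:nonneg_Vs} the non-negativity $\bm{v}_m \geq \bold{0}$ forces $t_m = 0$ together with $\bm{x} \in \Delta \subset \bb{R}_{+}^N$ to imply $\bm{v}_m^T \bm{x} = 0$, which (generically) collapses $\mc{T}_{\bm{t}}$ to a face of measure zero.

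The main obstacle is the rigorous justification of two technical points that I expect are exactly why the statement is left as a conjecture. First, the coarea identity and the legitimacy of differentiating under the ILT both require the Laplace integrals to converge absolutely and uniformly on a suitable right half-space in the spirit of Lemma \ref{lem:ltexistence} --- a condition that, per Remark \ref{rem:ommit_existence_conditions}, the paper deliberately avoids treating in full generality. Second, the iterated application of (LT7) demands that all intermediate boundary terms vanish in the correct order; while the argument above is heuristically compelling, a careful induction over $m$ handling each axis $t_m = 0$ jointly with the orthogonality and non-negativity of $\bm{V}_s$ appears to be the missing ingredient that would promote the statement from a conjecture to a theorem.
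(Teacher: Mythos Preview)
The paper does not prove this statement: it is labeled a conjecture, supported only by numerical comparison with simplicial decomposition. The accompanying remark pinpoints the missing ingredient as a moment analogue of \cite[Prop.~3.3]{Las83}, i.e., a relation of the form $\mu_k(\bm t)=\partial^M/(\partial t_1\cdots\partial t_M)\,\int_{\mc T_{\bm t}} x_k\,d\bm x$ paralleling \eqref{equ:polyface_differentiate}. The paper stops there.

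Your approach goes further than the paper and in fact supplies exactly the missing link, via a different device: rather than look for a face-differentiation lemma in the style of \cite{Las83}, you invoke the coarea formula for the orthonormal linear map $\bm V_s^T$ to obtain $M_k(\bm t)=\int_{\{\bm s\le\bm t\}}\mu_k(\bm s)\,d\bm s$ directly, whence $\mu_k=\partial^M M_k/(\partial t_1\cdots\partial t_M)$. This is precisely the moment analogue of \eqref{equ:polyface_differentiate} that the paper says is lacking, and it drops out of coarea with unit Jacobian thanks to $\bm V_s^T\bm V_s=\bm I_M$. After that, your (LT7) step is identical to the paper's Appendix~\ref{app:appA}.

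The two technical points you flag as potential obstacles are, under Assumption~\ref{ass:nonneg_Vs}, routine rather than genuine gaps. The boundary terms vanish because $\bm v_m\ge\bold 0$ and $\Delta\subset\bb R_+^N$ force the constraint $\bm v_m^T\bm x\le 0$ to cut $\Delta$ down to a set of $N$-dimensional Lebesgue measure zero; this propagates through the iterated partial derivatives by the same argument, since each intermediate derivative is an integral of $\mu_k$ over a region that becomes $\bm s$-measure zero when any remaining $t_m\to 0^+$. The convergence of the Laplace integrals is automatic here because $M_k$ is bounded and supported in a compact subset of $\bb R_+^M$. So your sketch, once written out carefully, is a complete proof under Assumption~\ref{ass:nonneg_Vs}; the paper's hesitation seems to stem from seeking a citation-ready lemma in the \cite{Las83} framework rather than switching to coarea.
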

\begin{remark}
Conj. \ref{conj:centPt} originates from Prop. \ref{prop:volPt} and its numerical correctness was verified by extensive comparison with integration by simplicial decomposition \cite{BueEngFuk00}. In order to close a gap in a rigorous proof, we need to relate the moment of a polyhedron to the moment of an $(N-M)$-dimensional face in the spirit of \eqref{equ:polyface_differentiate} \cite[Prop. 3.3]{Las83}. 
\end{remark}
\begin{assumption}
Conj. \ref{conj:centPt} is assumed to be valid in what follows.
\end{assumption}

\begin{figure}
\centering 
  \includegraphics[width=\linewidth]{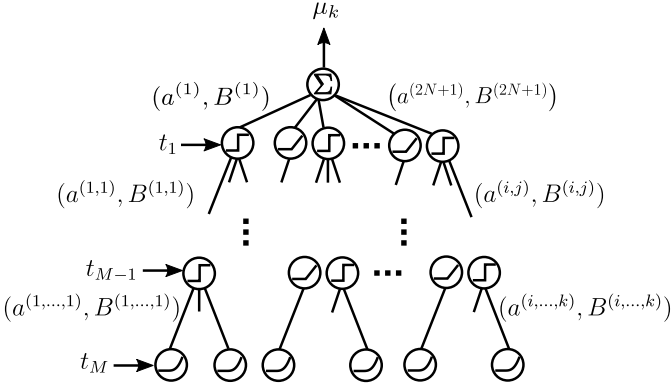}
\caption{Moment network for $\bg{\mu}_k$ with \textit{rectified linear}, \textit{rectified polynomial} and \textit{threshold} layers for $\bm{t}\in \bb{R}^M$.}
\label{fig:computation_tree_moment}
\end{figure}

To evaluate the $M$D-ILT in \eqref{equ:lapmomPt} we inspect the terms in the sum \eqref{equ:baldoni_extended} and note that the first term contains only simple poles permitting its Laplace transformation via Lemma \ref{lem:exp_invlap}. In fact, the corresponding term already appears in \eqref{equ:baldoni_int} and accordingly the $M$D-ILT is already computed as part of the volume computation network.
However, this does not apply to the remaining $2N$ terms as they contain quadratic terms $\langle \bm{l}, \bm{s}_k - \bm{s}_n \rangle$ and $\langle \bm{l}, \bm{s}_n - \bm{s}_k \rangle$. Accordingly, for $M\geq 2$ two rows of the denominator matrix $\bm{B}^{(i)}$, $2\leq i \leq 2N +1$ will be linearly dependent violating the assumptions of Lemma \ref{lem:exp_invlap}. The following proposition provides a modified result applicable to ILTs of \textit{exp-over-poly}-functions with one double pole (resp. two linearly dependent rows of $\bm{B}$).\footnote{While higher order (e.g. cubic, quartic) terms may appear when employing particular structured measurement matrices like partial Hadamard or discrete cosine transform matrices, these cases may be reduced to the considered quadratic case when a small numeric perturbation is applied.}

\begin{proposition}[ILT of \textit{exp-over-poly} function with one double pole]\label{prop:exp_invlap_dpole}
Let $M\geq 2$, $N\geq 3$, $\bm{B}_{:,1}\in \bb{R}_{\neq {0}}^N$ and assume w.l.o.g. that the first and second row of $\bm{B}$ are equal, i.e., $\bm{B}_{1,:}=\bm{B}_{2,:}$. Then, the transform pair $F(\lambda_1,\hdots,\lambda_M) \Laplace f(t_1,\lambda_2,\hdots,\lambda_M)$ is given by
\begin{align}\label{equ:exp_invlap_M2_dpole}
\frac{\exp( -\langle \bm{a}, \bg{\lambda} \rangle )}{ \prod_{n=1}^N [\bm{B} \bg{\lambda}]_n }  \overset{ a_1 \geq 0}{\Laplace} 
\frac{ (t_1 - a_1)_+ }{\prod_{n=1}^N B_{n,1}} \frac{ \exp( -\langle \bm{a}^{(1)}, \bg{\lambda}_{2:M} \rangle) }{ \prod_{n'=1}^{N-1} [\bm{B}^{(1)} \bg{\lambda}_{2:M}]_{n'} } \nonumber \\
 +  \frac{\ds{1}_+(t_1-a_1)}{\prod_{n=1}^N B_{n,1}} \sum_{n=2}^{N-1} \biggl(\frac{ \exp( -\langle \bm{a}^{(n)}, \bg{\lambda}_{2:M} \rangle) }{ \prod_{n'=1}^{N-1} [\bm{B}^{(n)} \bg{\lambda}_{2:M}]_{n'} } \nonumber \\
 -  \frac{ \exp( -\langle \bm{a}^{(1)}, \bg{\lambda}_{2:M} \rangle) }{ \prod_{n'=1}^{N-1} [\bm{B}^{(n)} \bg{\lambda}_{2:M}]_{n'} } \biggr).
\end{align}

Setting $\bm{C} := \bm{B} \oslash (\bm{B}_{:,1} \bold{1}^T)$ shows that $\bm{a}^{(n)}\in \bb{R}^{M-1}$ and $\bm{B}^{(n)} \in \bb{R}^{N-1 \times M-1}$ ($n\in\{1,\hdots,N-1\})$ are equal to
\begin{align}
\bm{a}^{(n)} & = \begin{cases}
\bm{a}_{2:M} + (t_1 - a_1) \bm{C}_{1,2:M}, \ & n=1\\
\bm{a}_{2:M} + (t_1 - a_1) \bm{C}_{n+1,2:M}, \ &  n\geq 2
\end{cases}
\end{align}
\begin{align}
\bm{B}^{(n)} & =  \begin{cases}
\bm{C}_{3:N,2:M} - \bold{1} \bm{C}_{1,2:M}, \  & n=1 \\
\begin{bmatrix} \bm{C}_{n+1,2:M} - \bm{C}_{1,2:M} \\ \bm{C}_{3:N,2:M} - \bold{1} \bm{C}_{n+1,2:M} \end{bmatrix}, \  & n \geq 2.
\end{cases}
\end{align}

\end{proposition}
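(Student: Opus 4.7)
The plan is to reduce the claim to a one-variable partial fraction decomposition in $\lambda_1$, treating $\lambda_2,\dots,\lambda_M$ as frozen parameters, and then to invert each summand by elementary table lookup. Following the normalization used in the proof of Lemma~\ref{lem:exp_invlap}, I would first pull out the scalar $(\prod_{n=1}^N B_{n,1})^{-1}$ so that every factor in the denominator takes the form $\lambda_1+c_n$ with $c_n:=\bm{C}_{n,2:M}^T\bg{\lambda}_{2:M}$, and set aside the frozen factor $\exp(-\langle \bm{a}_{2:M},\bg{\lambda}_{2:M}\rangle)$. The hypothesis $\bm{B}_{1,:}=\bm{B}_{2,:}$ forces $c_1=c_2$, so the integrand to be inverted reads $\exp(-a_1\lambda_1)/\bigl[(\lambda_1+c_1)^2\prod_{n=3}^N(\lambda_1+c_n)\bigr]$.

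Next I would perform a partial-fraction expansion with one double pole and $N-2$ simple poles. Standard residue formulas yield the coefficients $A=\prod_{n\geq 3}(c_n-c_1)^{-1}$ attached to $(\lambda_1+c_1)^{-2}$, $D_n=(c_n-c_1)^{-2}\prod_{k\geq 3,\,k\neq n}(c_k-c_n)^{-1}$ attached to $(\lambda_1+c_n)^{-1}$, and $B=-A\sum_{n\geq 3}(c_n-c_1)^{-1}$ attached to $(\lambda_1+c_1)^{-1}$; the last formula is obtained by differentiating $\prod_{n\geq 3}(x+c_n)^{-1}$ at $x=-c_1$. Applying (LT3), (LT4), and the shift (LT6) from Tab.~\ref{tab:tab_lts} then inverts the double-pole piece into $A\,(t_1-a_1)_+\exp(-c_1(t_1-a_1))$, each simple-pole summand into $D_n\,\ds{1}_+(t_1-a_1)\exp(-c_n(t_1-a_1))$, and leaves a residual $B\,\ds{1}_+(t_1-a_1)\exp(-c_1(t_1-a_1))$ from the simple-pole part of the double factor.

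The crucial algebraic step, and the only non-routine part of the argument, is the residue-sum identity $B+\sum_{n=3}^N D_n=0$. This holds whenever $N\geq 3$ because the rational function $[(\lambda_1+c_1)^2\prod_{n\geq 3}(\lambda_1+c_n)]^{-1}$ decays as $O(\lambda_1^{-(N-1)})$ at infinity, so the sum of its $\lambda_1$-residues must vanish (alternatively one can verify the identity directly from the explicit formulas for $A$, $B$, $D_n$). The identity converts $B\exp(-c_1\tau)+\sum_n D_n\exp(-c_n\tau)$ into $\sum_n D_n\bigl[\exp(-c_n\tau)-\exp(-c_1\tau)\bigr]$, producing exactly the difference-of-exponentials pattern on the right-hand side of \eqref{equ:exp_invlap_M2_dpole}.

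The remaining work is bookkeeping: reintroduce the frozen factor $\exp(-\langle\bm{a}_{2:M},\bg{\lambda}_{2:M}\rangle)$, recognise $\bm{a}_{2:M}+(t_1-a_1)\bm{C}_{k,2:M}$ as the vector $\bm{a}^{(k)}$ defined in the statement, and relabel the non-degenerate pole indices $k\in\{3,\dots,N\}$ as $n\in\{2,\dots,N-1\}$ so that the matrices $\bm{B}^{(n)}$ collect the differences $\bm{C}_{3:N,2:M}-\bold{1}\bm{C}_{n+1,2:M}$ required by the statement. The main obstacle is the residue-sum identity; without it a spurious $\exp(-c_1\cdot)$ contribution survives and fails to combine into the clean difference form. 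Higher-order degeneracies in which several of the $c_n$'s coincide can be handled either by a continuity argument or, as noted in the footnote of the proposition, by applying a small numeric perturbation.
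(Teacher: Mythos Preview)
Your argument is correct, but it takes a slightly different (and longer) route than the paper. The paper does not carry out a full partial-fraction expansion with a double pole and $N-2$ simple poles; instead it strips off one copy of the repeated factor $(\lambda-\lambda_1^{(0)})$ and expands only $\prod_{n=1}^{N-1}(\lambda-\lambda_n^{(0)})^{-1}$ into simple fractions via \eqref{equ:partfrac}. The $n=1$ summand then reads $(\lambda-\lambda_1^{(0)})^{-2}/\prod_{n'\ne 1}(\lambda_1^{(0)}-\lambda_{n'}^{(0)})$ and is inverted by (LT4), while each $n\ge 2$ summand reads $[(\lambda-\lambda_1^{(0)})(\lambda-\lambda_n^{(0)})]^{-1}/\prod_{n'\ne n}(\lambda_n^{(0)}-\lambda_{n'}^{(0)})$ and is inverted directly by (LT5), which already delivers the difference of exponentials. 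Thus the ``difference form'' in \eqref{equ:exp_invlap_M2_dpole} falls out immediately, with no need for your residue-sum identity $B+\sum_n D_n=0$. Your approach buys a more mechanical procedure (full PFE followed by elementary inversions), at the cost of an extra algebraic cancellation step; the paper's buys brevity by recognising that (LT5) is tailor-made for the cross terms. As a minor quibble, the rational function you analyse decays like $O(\lambda_1^{-N})$, not $O(\lambda_1^{-(N-1)})$, though either suffices for the residue argument since $N\ge 3$.
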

\begin{proof}
The proof is deferred to Appendix \ref{app:appD}.
\end{proof}
Using Prop. \ref{prop:exp_invlap_dpole} we can readily build a computational tree (resp. neural network) to compute the moment $\mu_k$ ($1\leq k \leq N$) comprising $\mc{O}(N^M)$ nodes as depicted in Fig. \ref{fig:computation_tree_moment}. It can be shown that the individual networks for volume and moments perform a set of identical computations, i.e., the networks share particular subnetworks, which results in a subadditive number of nodes for a combined network computing $\rs{vol}(\mc{P}_\bm{t})$ and $\bg{\mu}$. However, in the worst-case the resulting number of nodes in the network still grows as $\mc{O}(N^M)$.
\begin{remark}
The worst-case growth of $\mc{O}(N^M)$ corresponds to a fully connected network but in numerical experiments large subnetworks were never activated by the corresponding activation functions, regardless of the particular network input $\bm{t} \in \{ \bm{V}_s^T \bm{x}: \ \bm{x} \in \Delta\}$. In addition, the size of activated subnetworks strongly depends on the choice of the orthogonal basis vectors $\bm{V}_s$ of the subspace $\rs{ker}^{\perp}(\bm{A})$. The choice of a favorable (or even optimal) basis of the affine subspace is beyond the scope of this paper but poses an interesting question to be addressed in future works.
\end{remark}

\section{Numerical example: compressing soft-classification vectors}\label{sec:numresults}
To assess the performance of the proposed network, we consider the problem of \textit{soft-decision compression} for \textit{distributed decision-making}. In our setting, nodes reduce data traffic by transmitting only compressed versions of their local soft-classification vectors. At the receiver-side, a fusion center recovers the data by employing enhanced algorithms for robust decision results. One possible application of this scenario is a multi-view image classification that we consider in this section. More precisely, we study the compressibility of softmax-outputs of deep learning classifiers on MNIST handwritten digits and CIFAR-10 images obtained using MatConvNet (www.vlfeat.org/matconvnet/) to assess the fitness of the uniform simplex distribution for practical datasets. The outputs of the trained classifiers are vectors $\bt{x} \in \bb{R}_+^{10}$ that obey $\sum_{n=1}^{10} \tilde{x}_n = 1$, where each entry measures the estimated class-membership probability corresponding either to occurrence of digits $\{0,\hdots,9\}$ for MNIST, or occurrence of classes \textit{\{plane, car, bird, cat, deer, dog, frog, horse, ship, truck\}} for CIFAR-10. These vectors are preprocessed by removing one uninformative component (e.g. $\tilde{x}_{10} = 1 - \sum_{n=1}^{9} \tilde{x}_n$) which ensures that
\begin{align}
\bm{x}:= \bt{x}_{1:9} \in \Delta.
\end{align}
Examples of a realization $\bm{x}$ of $\bs{x}\sim\mc{U}(\Delta)$ as well as input images and outputs of standard classifiers are given in Fig. \ref{fig:mnist_examples}, \ref{fig:cifar_examples} and \ref{fig:realizations}, respectively.
\begin{figure}
\centering 
  \includegraphics[width=0.8\linewidth]{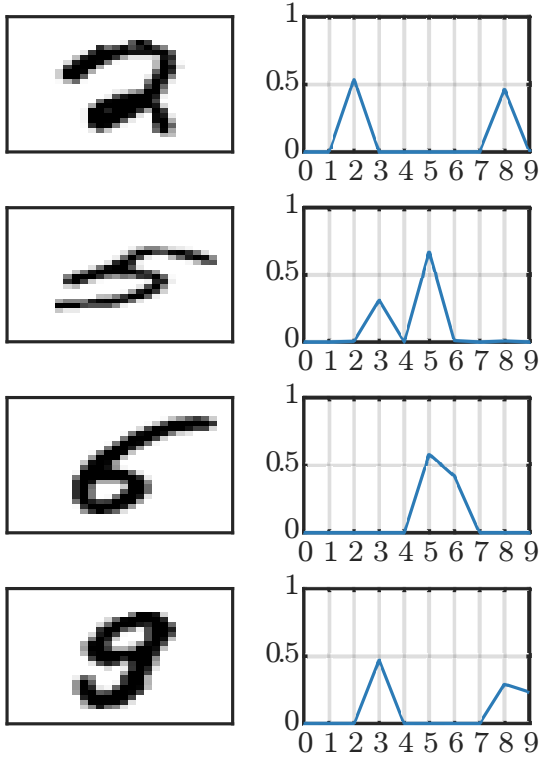}
\caption{MNIST images and softmax-classifications for low-confidence examples. True labels are $\{2,5,6,9\}$ and estimated labels are $\{2,5,5,3\}$.}
\label{fig:mnist_examples}
\end{figure}
We adjust the confidence levels to match the measured accuracy via the \textit{temperature}-parameter of the \textit{softmax}-output such that the top-entry is on average $0.9748$ for MNIST and $0.7987$ for CIFAR-10 (for the default parameter almost all decisions are made with unduly high confidence levels). We measure the empirical mean-square error 
\begin{align}
\rs{eMSE} := \frac{1}{N_s} \sum_{i=1}^{N_s} \lVert \bm{x}^{(i)} - \bh{x}^{(i)} \rVert_2^2
\end{align}
over a testing set of cardinality $N_s=500$ and compare the proposed estimator $\bh{x}$ \eqref{equ:centest} with exact centroid computation by simplicial decomposition using \textit{Qhull} \cite{BueEngFuk00} as well as solutions to the well-established non-negative \textit{$\ell_1$-minimization}
\begin{align}
\bh{x}_{\ell_1} = \underset{\bm{x}\in\bb{R}_+^N:\ \bm{y}=\bm{A}\bm{x}}{\rs{argmin}} \ \ \lVert \bm{x} \lVert_1 
\end{align}
and (simplex constrained) \textit{$\ell_2$-minimization}
\begin{align}
\bh{x}_{\ell_2} = \underset{\bm{x}\in\Delta:\ \bm{y}=\bm{A}\bm{x}}{\rs{argmin}} \ \ \lVert \bm{x} \lVert_2^2.
\end{align}
\begin{figure}
\centering 
  \includegraphics[width=0.8\linewidth]{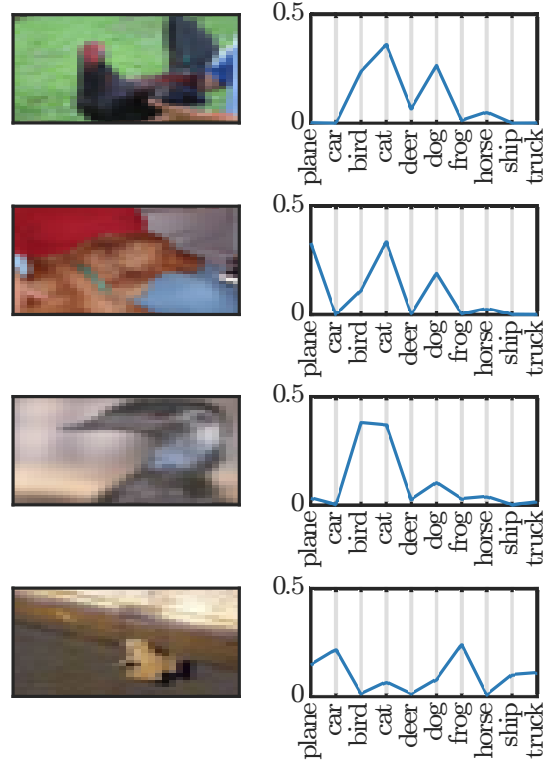}
\caption{CIFAR-10 images and softmax-classifications for low-confidence examples. True labels are \textit{\{bird, dog, bird, frog\}} and estimated labels are \textit{\{cat, cat, bird, frog\}}.}
\label{fig:cifar_examples}
\end{figure}

For the compression matrix $\bm{A}$, we use an i.i.d. random Gaussian matrix drawn once and set fixed for all simulations. The compressed soft-classification vector is given by $\bm{y} = \bm{A} \bm{x} \in \bb{R}^M$ where we vary the number of compressed measurements $M$. As a reference, we also showcase the results for input signals following the presumed uniform simplex distribution in Fig. \ref{fig:results_rnd}. 
It is interesting to see that for the prescribed uniform distribution the proposed centroid estimator outperforms the conventional $\ell_1$- and $\ell_2$-based methods by a factor of about $3$ and $2$, respectively (see Fig. \ref{fig:results_rnd}). For the MNIST and CIFAR-10 dataset, the proposed method is on par with the well-established $\ell_1$-minimization method (see Fig. \ref{fig:results_mnist} and \ref{fig:results_cifar}). All simulations were run on a laptop with i7-2.9 GHz processor.\footnote{In the spirit of reproducible research, the simulation code for the computation of volumes and centroids using Laplace techniques and simplicial decomposition is made available at \url{https://github.com/stli/CentNet}.}
Of course, additional performance gains of the proposed network can be expected from fine-tuning of the analytically obtained parameters based on given datasets.
As a side note, the volumes of the intersection polytopes can be surprisingly small and in some cases were below numerical precision causing precision problems and numerical underflows. On the other hand, numerical instability is a well-known problem in the design of deep neural networks (see e.g. \cite{GooBenCou16}) and appropriate numerical stabilization techniques, e.g., via logarithmic calculus, are often required. For the datasets at hand, numerical underflows occurred for a small number of MNIST examples, where the input $\bm{x}$ was close to a vertex $\bm{s}_i$ of the simplex $\Delta$ and the intersection volume becomes extremely small. We note that these examples were removed for the evaluation of the neural network but kept for all other estimators. As this case is rather easy to solve using conventional $\ell_1$-minimization and resulting estimation errors are typical smaller than average the depicted results do not favour the proposed approach. A detailed numerical analysis and numerically redesigned network is beyond the scope of this paper.

\begin{figure}
\centering 
  \includegraphics[width=1\linewidth]{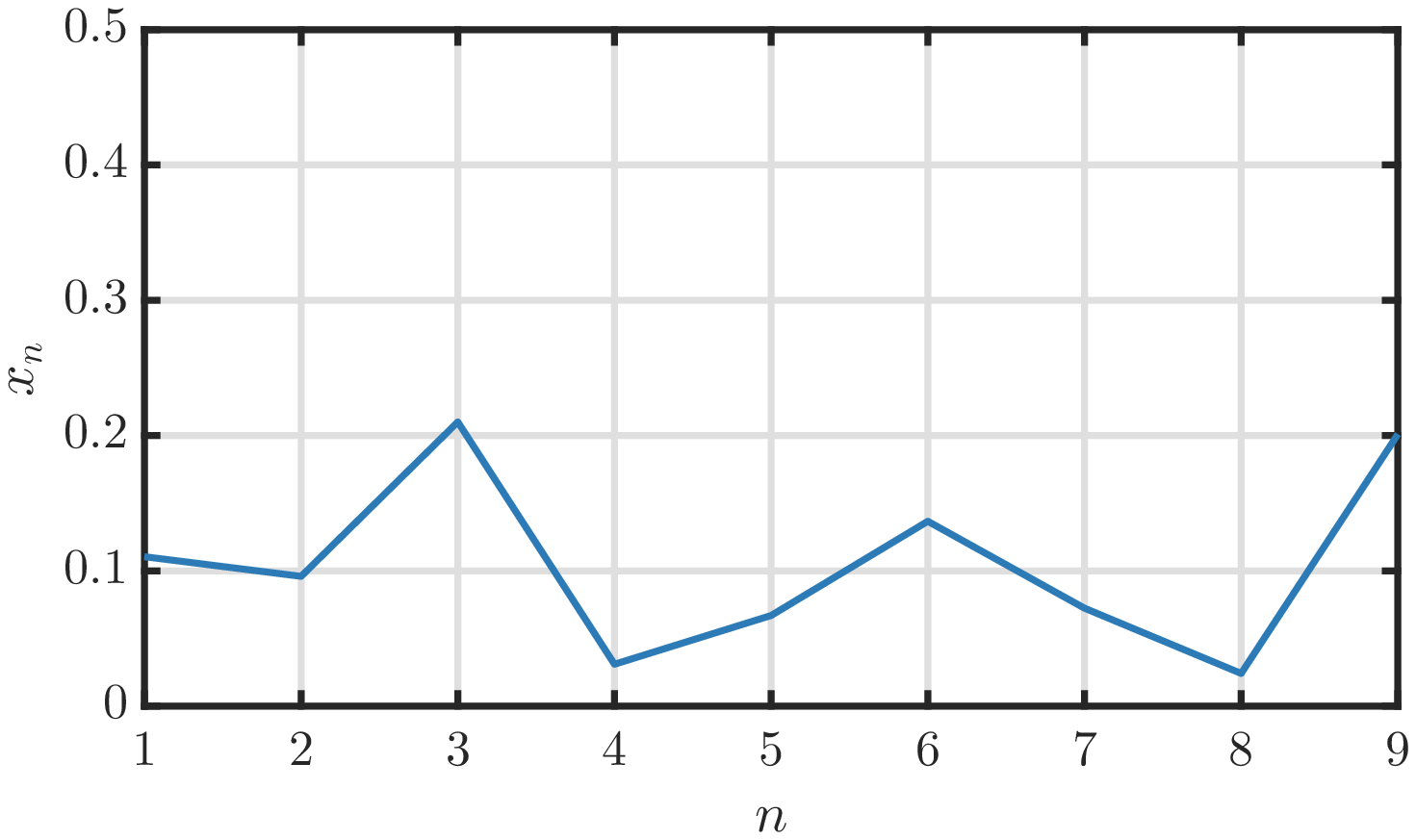}
\caption{Realization of $\bs{x} \sim \mc{U}(\Delta)$ for $N=9$.}
\label{fig:realizations}
\end{figure}

\begin{figure}
\centering 
  \includegraphics[width=1\linewidth]{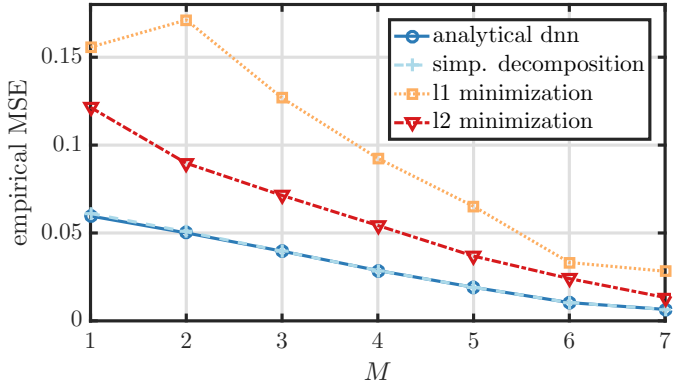}
\caption{Empirical MSE for $N=9$, $\bm{x}\sim\mc{U}(\Delta)$, i.i.d. Gaussian matrix $\bm{A}$ and varying number of measurements $M$.}
\label{fig:results_rnd}
\end{figure}
\begin{figure}
\centering 
  \includegraphics[width=1\linewidth]{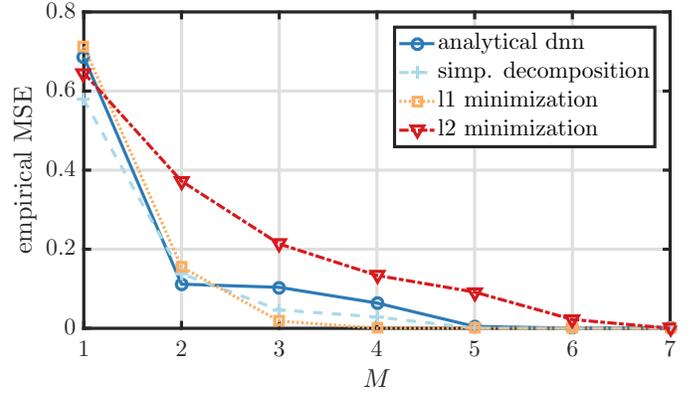}
\caption{Empirical MSE for MNIST dataset, i.i.d. Gaussian matrix $\bm{A}$ and varying number of measurements $M$.}

\vspace{-9pt}
\label{fig:results_mnist}
\end{figure}
\begin{figure}
\centering 
  \includegraphics[width=1\linewidth]{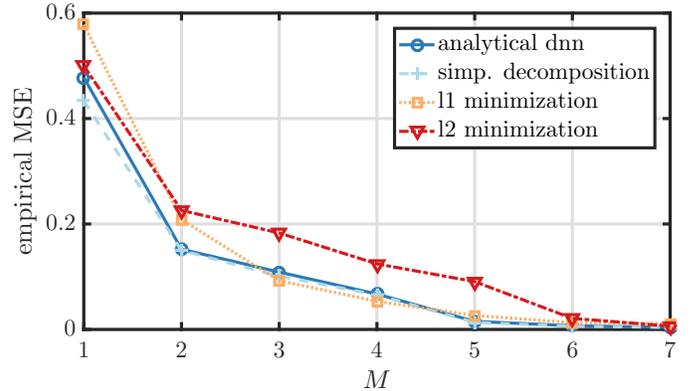}
\caption{Empirical MSE for CIFAR-10 dataset, i.i.d. Gaussian matrix $\bm{A}$ and varying number of measurements $M$.}
\label{fig:results_cifar}
\end{figure}

\section{Conclusion}
In this paper we proposed a novel theoretically well-founded neural network for sparse recovery. By using multidimensional Laplace techniques and a prescribed input distribution, we obtain a neural network in a fully analytical fashion. Interestingly, the obtained neural network is composed of weights as well as commonly employed threshold functions, rectified linear (ReLU) and rectified polynomial (ReP) activation functions. The obtained network is a first step to understanding the practical effectiveness of classical deep neural network architectures. To scale to higher dimensions, a main problem is to decrease the network width which may be achieved by deactivating maximally large subnetworks via a well-chosen basis of the affine subspace which poses an interesting problem for future works. In addition, it may be beneficial to investigate approximations of the constructed network by a smaller subnetwork which may yield a reasonable approximation of the centroid of interest.

\appendices
\section{Proof of Proposition \ref{prop:volPt}}\label{app:appA}
First note that $\mc{P}_\bm{t}$ is an $(N-M)$-dimensional face of $\mc{T}_\bm{t}$. Hence, repeated application of \cite[Prop. 3.3]{Las83} shows that

\begin{align}\label{equ:polyface_differentiate}
\rs{vol}(\mc{P}_\bm{t}) = \lVert \bm{v}_1 \rVert_2 \cdots \lVert \bm{v}_M \rVert_2 \frac{\partial^M}{\partial t_1 \cdots \partial t_M} \rs{vol}(\mc{T}_\bm{t}).
\end{align}

Considering \eqref{equ:lapid} with $f(\bm{t})=\rs{vol}(\mc{P}_\bm{t})$ and the fact that $\lVert \bm{v}_m \rVert_2 =1$ yields
\begin{align}
\rs{vol}(\mc{P}_\bm{t}) & = \left( \circ_{m} \mc{L}_{\lambda_m}^{-1} \right)\left( \circ_{m} \mc{L}_{\lambda_m} \right) \rs{vol}(\mc{P}_\bm{t}) \nonumber \\
& = \left( \circ_{m} \mc{L}_{\lambda_m}^{-1} \right)\left( \circ_{m} \mc{L}_{\lambda_m} \right) \frac{\partial^M}{\partial t_1 \cdots \partial t_M} \rs{vol}(\mc{T}_\bm{t}).
\end{align}

By continuity of $\rs{vol}(\mc{T}_\bm{t})$ we have $\forall \ m \in \{1,\hdots,M\}$ that
\begin{align}
\rs{lim}_{t_m \to 0} \ \rs{vol}(\mc{T}_\bm{t}) = 0.
\end{align}
So repeated application of the transform pair (LT7) yields the desired result.

\section{Proof of Lemma \ref{lem:exp_invlap}}\label{app:appB}
For Lemma \ref{lem:exp_invlapM1} with $M=1$ the stated transform pair is obtained by using the transform pairs (LT6), (LT2) and linearity (LT1).

For Lemma \ref{lem:exp_invlapM2} with $M\geq2$ we first prove the transform pair ($\lambda_n^{(0)}\neq \lambda_{n'}^{(0)}$, $n \neq n'$)
\begin{align}\label{equ:appB_pair1}
\frac{c  \exp(-a \lambda) }{ \prod_{n=1}^N (\lambda - \lambda_n^{(0)}) } \ \overset{ a \geq 0 }{ \Laplace } \ \ds{1}_+(t-a) \sum_{n=1}^N \frac{ c \exp( \lambda_n^{(0)} (t-a) ) }{ \prod_{n'\neq n} ( \lambda_{n}^{(0)} - \lambda_{n'}^{(0)} ) }.
\end{align}
To this end, we use the partial fraction expansion (see \cite[(10) p.77]{Doe12})
\begin{align}\label{equ:partfrac}
\frac{1}{\prod_{n=1}^N ( \lambda - \lambda_n^{(0)} )} & = \sum_{n=1}^N \frac{1}{ \lambda - \lambda_n^{(0)}} \frac{1}{ \frac{d}{d \lambda} \prod_{n'=1}^N ( \lambda - \lambda_{n'}^{(0)} ) \vert_{\lambda = \lambda_n^{(0)} } } \nonumber \\
& = \sum_{n=1}^{N} \frac{1}{ \lambda - \lambda_n^{(0)}} \frac{ 1 }{\prod_{n'\neq n} (\lambda_{n}^{(0)} - \lambda_{n'}^{(0)} ) }
\end{align}
in conjunction with the transform pair (LT3) to obtain the transform pair
\begin{align}\label{equ:appB_pair2}
\frac{1}{ \prod_{n=1}^N (\lambda - \lambda_n^{(0)} )} \ \Laplace \ \sum_{n=1}^{N} \frac{ \exp( \lambda_n^{(0)} t) }{\prod_{n'\neq n} (\lambda_{n}^{(0)} - \lambda_{n'}^{(0)} ) }.
\end{align}
Then, \eqref{equ:appB_pair1} follows from \eqref{equ:appB_pair2} by linearity and the transform pair (LT6). Finally, by assumption,
\begin{align}
\lambda_n^{(0)} := -\frac{1}{B_{n,1}} \bm{B}_{n,2:M} \bg{\lambda}_{2:M}
\end{align}
are pairwise distinct (we can assume $\bg{\lambda}_{2:M}$ is arbitrary but fixed) and the result \eqref{equ:ilt_exppoly_M2} follows from \eqref{equ:appB_pair1} by setting $\lambda:=\lambda_1$, $a:=a_1$ and $c:= (\prod_{n=1}^N B_{n,1})^{-1} \exp( - \langle \bm{a}_{2:M}, \bg{\lambda}_{2:M} \rangle)$.

\section{Proof of Lemma \ref{lem:baldoni_extended}}\label{app:appC}
To obtain the desired integral, we assume $\bm{l}_{1:N \setminus k}$ is arbitrary but fixed, and consider the function 
\begin{align}
f(\bm{x},l_k):=\exp(-\langle \bm{l} , \bm{x} \rangle).
\end{align}
As $f$ is jointly continuous in the variables $\bm{x}$, $l_k$ and $\frac{\partial}{\partial l_k} f(\bm{x},l_k)$ is continuous, it holds that (see \cite[Th. 8.11.2]{Die69})
\begin{align*}
& \int_\Delta x_k \exp(-\langle \bm{l}, \bm{x} \rangle ) \ d\bm{x} = \int_{\Delta} -\frac{ \partial }{\partial l_k} f(\bm{x},l_k) \ d\bm{x}  \\
&= - \frac{\partial }{ \partial l_k } \int_\Delta f(\bm{x},l_k) \ d\bm{x}
= - \frac{\partial }{ \partial l_k } \sum_{n=1}^{N+1} \frac{ \exp( - \langle \bm{l}, \bm{s}_n \rangle ) }{ \prod_{n'\neq n} \langle \bm{l}, \bm{s}_{n'} - \bm{s}_n \rangle}.
\end{align*}
Carrying out the differentiation yields the desired result.

\section{Proof of Proposition \ref{prop:exp_invlap_dpole}}\label{app:appD}
First we obtain the transform pair $F(\lambda)\Laplace f(t)$ ($\lambda_n^{(0)}\neq \lambda_{n'}^{(0)}$ for $n \neq n'\in\{1,\hdots,N-1\}$)
\begin{align}\label{equ:dpole_eq1}
\frac{\exp(a \lambda)}{(\lambda-\lambda_1^{(0)}) \prod_{n=1}^{N-1} (\lambda - \lambda_n^{(0)})}  \overset{ a \geq 0 }{ \Laplace }  
 \frac{ (t-a)_+ \exp( \lambda_1^{(0)}(t-a)) }{\prod_{n=2}^{N-1} (\lambda_1^{(0)} - \lambda_n^{(0)}) } \nonumber \\
 + \ds{1}_+(t-a) \sum_{n=2}^{N-1} \frac{ \exp(\lambda_n^{(0)}(t-a)) - \exp(\lambda_1^{(0)}(t-a))}{ (\lambda_n^{(0)} - \lambda_1^{(0)}) 
\prod_{n'\neq n} (\lambda_n^{(0)} - \lambda_{n'}^{(0)}) }
\end{align}
by using the partial fraction expansion \eqref{equ:partfrac} 
\begin{align}\label{equ:partfrac}
\frac{1}{( \lambda - \lambda_1^{(0)} ) \prod_{n=1}^{N-1} ( \lambda - \lambda_n^{(0)} )} = \\ = \frac{1}{( \lambda - \lambda_1^{(0)} )}\sum_{n=1}^{N-1} \frac{1}{ \lambda - \lambda_n^{(0)}} \frac{ 1 }{\prod_{n'\neq n} (\lambda_{n}^{(0)} - \lambda_{n'}^{(0)} ) }
\end{align}
in conjunction with the transform pairs (LT4), (LT5) and (LT6).
By assumption, $\bm{B}_{1,:} = \bm{B}_{2,:}$ and $\bg{\lambda}_{2:M}$ is arbitrary but fixed so that
\begin{align}
\lambda_n^{(0)} := \begin{cases}
-\frac{1}{B_{1,1}} \bm{B}_{1,2:M} \bg{\lambda}_{2:M}, \ &n=1 \\
-\frac{1}{B_{n,1}} \bm{B}_{n+1,2:M} \bg{\lambda}_{2:M}, \ &n\in\{2,\hdots,N-1\}
\end{cases}
\end{align}
are pairwise distinct and \eqref{equ:exp_invlap_M2_dpole} follows from \eqref{equ:dpole_eq1} by setting $\lambda:=\lambda_1$, $a:=a_1$ and multiplying both sides by $c:=(\prod_{n=1}^N B_{n,1})^{-1} \exp(-\langle \bm{a}_{2:M}, \bg{\lambda}_{2:M}\rangle)$.



\ifCLASSOPTIONcaptionsoff
  \newpage
\fi

\IEEEtriggeratref{15}
\bibliographystyle{IEEEbib}
\bibliography{refs}

%

\end{document}